\def\@fnsymbol#1{\ensuremath{\ifcase#1\or *\or \sharp\or \mathchar "278\or
    \mathchar "27B\or \| \or \dagger\or \ddagger\or \dagger\dagger \or
    \ddagger\ddagger \else\@ctrerr\fi}}
\newcommand{\F}{\mathcal{F}}
\DeclareMathOperator{\conv}{conv}
\newcommand{\maximal}{\mbox{maximal}}
\newcommand{\m}[1]{\ensuremath\mathbf{#1}} 
\newcommand{\ve}[1]{\ensuremath\mathbf{#1}} 
\def\lor{\vee}
\def\land{\wedge}
\def\liff{\leftrightarrow}
\newenvironment{pmat}{%
  \left(%
    \begin{array}{*{10}r}
    }{%
    \end{array}%
  \right)%
}
\def\CPLEX{{\small CPLEX}}
\def\SUN{{\small SUN}}
\def\OR{{\small OR}}
\def\AND{{\small AND}}
\def\SAT{{\small SAT}}
\def\TRUE{{\textsc{True}}}
\def\FALSE{{\textsc{False}}}
\def\CNF{{\small CNF}}
\def\IFFSAT{{\small IFFSAT}}
\def\IFF{{\small IFF}}
\def\NP{{\ensuremath{\mathcal NP}}}
\def\RIFFSAT{{\small RIFFSAT}}
\def\TCR{{\small TCR}}
\theoremstyle{plain}      \newtheorem{theorem}{Theorem}
\theoremstyle{plain}      \newtheorem{lemma}[theorem]{Lemma}
\theoremstyle{plain}      
\theoremstyle{plain}      
\theoremstyle{plain}      
\theoremstyle{definition} \newtheorem{definition}{Definition}
\theoremstyle{remark}     \newtheorem{remark}{Remark}
\theoremstyle{remark}     \newtheorem{example}{Example}
\theoremstyle{remark}     \newtheorem{problem}{Problem}
\title{Logic Integer Programming\\ Models for Signaling Networks}
\author{%
\setcounter{footnote}{0}
Utz-Uwe Haus\thanks{Institut f\"ur Mathematische Optimierung, 
Otto-von-Guericke-Universit\"at Magdeburg,  Universit\"atsplatz 2, D-39106
Magdeburg, Germany, phone: +49 391 6718646, fax: +49 391 6711171,
e-mail: \{haus,niermann,weismantel\}@imo.math.uni-magdeburg.de} \and
\setcounter{footnote}{0}
Kathrin Niermann\footnotemark \and 
Klaus Truemper\thanks{Department of
    Computer Science, University of Texas at Dallas, Richardson,
    Texas 75080, USA, phone: +1 972 883-2712, email:truemper@utdallas.edu}
  \and
\setcounter{footnote}{0}
Robert Weismantel\footnotemark}
\date{}
\begin{document}


\maketitle

\begin{abstract}
  We propose a static and a dynamic approach to model
  biological signaling networks, and show how each can be used to
  answer relevant biological questions. For this we use the two
  different mathematical tools of Propositional Logic and Integer
  Programming. The power of discrete mathematics for handling
  qualitative as well as quantitative data has so far not been
  exploited in Molecular Biology, which is mostly driven by
  experimental research, relying on first-order or statistical models.
  The arising logic statements and integer programs are analyzed and
  can be solved with standard software. For a restricted class of
  problems the logic models reduce to a polynomial-time solvable
  satisfiability algorithm. Additionally, a more dynamic
  model enables enumeration of possible time resolutions in
  poly-logarithmic time. Computational experiments are included.\\ 
  
  \textbf{Key Words:} biological signaling networks, modeling, integer
  programming, satisfiability, monotone boolean functions

\end{abstract}


\section{Introduction}

Cellular decisions are determined by highly complex molecular
interactions. In some biological systems like \emph{Saccharomyces} or
\emph{E. coli},
detailed measurements of the interacting molecules, including reaction
kinetics, have successfully been performed, allowing the construction
of quantitative models~\citep{feist:07}. In many other systems such extensive
measurements are not available, because of practical experimental
restrictions or ethical constraints. However, in these cases, there is
often still a sizable amount of qualitative information available, but
a lack of suitable predictive modeling tools.

We focus here on interactions in form of signal transduction
processes. For such a process we assume that a set of molecules that
are important for the biological unit is known. The biological unit
reacts to external signals or environmental challenges like
stimulation or infection. Typically, the molecules may be subdivided
into input components (e.g. receptors), intermediate components and
output components (e.g. transcription factors): When an external
signal arrives, this signal is processed through the entire unit by
first influencing a subset of the input components. Activation state
or presence/absence information is propagated through intermediate nodes
towards some of the output molecules. Based on the assumption that we
know the ``local'' mechanism of activation, it is our goal to predict
the global behavior of the system, identifying the underlying network
structure. For our purposes `activation' can mean any interaction that
can be explained biologically: A protein may be considered activated
after phosphorylation, Ca$^{++}$ flux may be detected, or a component
may change its location within the biological unit. Similarly,
`biological unit' need not be restricted to a single cell or
compartment, but any collection of components that are to be
considered.

Subsequently we propose a logic and an integer programming model to
analyze the static behavior of signaling networks. With both
approaches one is able to verify the biological modeling, find
potential failure modes and determine suitable intervention
strategies. But some phenomena in biological units like time delays
and (negative) feedback loops can not be modeled in this static
fashion. Thus we focus on the dynamics of signaling networks in the
second part. We extend the previous models so that activations can be
modeled as occurring at different time points or with different
signaling speed. In this section we present a generic framework for
computations with the dynamic model which can use solvers specific to
each of the modeling techniques, logic and integer programming, as
oracles. This ability requires a basic understanding of the
transformation from satisfiability systems to integer programming
models and vice versa. Therefore we discuss both sides in the static
as well as in the dynamic context hand in hand. The last section
provides computational tests showing the appropriateness of our
techniques.

\newpage

\section{Modeling Logical Interactions}
\label{sec:logical-models}

The simplest model of signaling processes is to 
collect local data in the form of logical formulas, that can be
written down in propositional logic~\citep{immu-2}: Introduce logical
variables for each component under consideration, and write down
implication formulas for experimentally proven knowledge statements
like \emph{``MEK activates ERK''} as
$$\mbox{MEK} \to \mbox{ERK}$$
and \emph{``In the absence of (activated) pten and ship1 we find that pi3k
generates (active) pip3''} as
$$\lnot{\mbox{pten}}\wedge\lnot{\mbox{ship1}}\wedge \mbox{pi3k} \to
\mbox{pip3}.$$

Let the formulas be denoted as $S_i$ with
$i\in\left\{1,\dots,s\right\}$. We can then identify the formula
$S=\bigwedge_{i=1}^s S_i$
with the model of the biological unit considered: All logical
statements $S_i$ should be valid at the same time to
model the global behavior of the unit. We will, as usual, use $A\to B$
as abbreviation of $(\lnot A)\lor B$, and $A \leftrightarrow B$
instead of $(A\to B)\land(B\to A)$. For the reader unfamiliar with
the formalism of propositional logic we refer
to~\citep{buening-lettmann:prop-logic}.

We will henceforth assume  that all implications in
the set $S_i$ are given in \OR-form $\bigvee_{j\in L^A_i} A_j \to
\bigvee_{j\in L^B_i} B_j$ for literals $A_j$, $B_j$. Here, $L^A_i$ is
the set of literals appearing on the left in formula $i$. We will also
require that the set $S$ has been extended with reverse implications
by first aggregating formulas with the same right-hand-side, and then
adding the reverse implication. Thus, for each set of literals
$R=\left\{B_j|j\in L^B_i\mbox{ for
    some }i\in\left\{1,\dots,s\right\}\right\}$ appearing on the
right-hand-side of the implications indexed by $I=\left\{i| S_i =
  \left(\bigvee_{j\in L^A_i} A_j \to \bigvee_{r\in R} r\right)\right\}$, we also find
the implication $\bigvee_{i\in I}\bigvee_{j\in L^A_i} A_j \leftarrow
\bigvee_{r\in R} r$ in $S$. This enforces that every activation must have a
`cause' within the given model.

The question whether there exists a pattern of activations
satisfying all formulas of $S$ is an instance of the satisfiability
(\SAT) problem~\citep{truemper:04}. In general, this problem is
to ask for a \emph{truth assignment} such that the logical formula in
\CNF{} form is \TRUE:

\begin{definition}[\CNF{}, truth assignment, satisfiable]\hspace{2cm}
  \begin{enumerate}
  \item A \emph{clause} $\alpha$ is a disjunction of literals,
    i.e. $\alpha=A_1\vee\ldots\vee A_n$ with literals $A_i$.
  \item A formula $\alpha$ is in \emph{Conjunctive Normal Form} (\CNF)
    if and only if $\alpha$ is a conjunction of clauses.
  \item A formula $\alpha$ is in \emph{$k$-Conjunctive Normal Form}
    ($k$-\CNF) if and only if $\alpha$ is a conjunction of $k$-clauses,
    i.e. every clause consists of at most $k$ literals.
  \item A \emph{truth assignment} $\Gamma$ of a propositional formula
    $\alpha$ is defined by
    \begin{align*}
      \Gamma :\left\{\alpha\left|\alpha \mbox{ is a propositional
            formula}\right.\right\} \rightarrow \left\{0,1\right\}.
    \end{align*}
    It can be calculated according to three rules:
    \begin{itemize}
    \item[(i)] $\Gamma (\neg\alpha):=1$, iff $\Gamma(\alpha)=0$
    \item[(ii)] $\Gamma (\alpha\vee\beta):=1$, iff $\Gamma(\alpha)=1$ or $
      \Gamma(\beta)=1$
    \item[(iii)] $\Gamma (\alpha\wedge\beta):=1$, iff $\Gamma(\alpha)=1$
      and $\Gamma(\beta)=1$
    \end{itemize}
  \item A propositional formula $\alpha$ is \emph{satisfiable} if and
    only if there exists a truth assignment $\Gamma$, so that $\Gamma
    (\alpha)=1$.  
  \end{enumerate}
\end{definition}

\begin{definition}\label{def:model}
  A \emph{model m} of a \CNF{} formula $C$ is a satisfying
  truth assignment of $C$. The set of all models of $C$ are denoted
  by $\mbox{models}(C)$. We denote that $m$ assigns 1 (0) to
  variable x by $m(x)=1$ $(m(x)=0)$. If $m(x)=1$ implies $m^*(x)=1$
  for two models $m$ and $m^*$, we say that $m\leq m^*$. If neither
  $m\leq m^*$ nor $m\geq m^*$ is true, the two models are
  \emph{incomparable}. We call a model \emph{maximal} (\emph{minimal})
  if there is no model $m^*$ such that $m<m^*$ ($m>m^*$). We denote
  the set of all maximal (minimal) models of $C$ by $\maximal(C)$
  ($\mbox{minimal}(C)$).
\end{definition}

\begin{problem}[\SAT]
  Given a \CNF{} (3-\CNF) formula $C$, the \emph{\SAT} (\emph{3-\SAT})
  problem is to decide if $C$ is satisfiable and, if so, to return a
  possible model.
\end{problem}

In the setting of \SAT{} problems we can also answer the question
whether, given a partial set of activations, there exists a solution
for the entire formula $S$, by fixing some logical
variables in $S$ to the prescribed values and solving the \SAT{}
problem for the remaining formula $S'$. We are thus prepared to
introduce the signaling network satisfiability problem (\IFFSAT) as:

\begin{problem}[\IFFSAT]
  Let 
  $$\textstyle
  S=\left\{S_i=\bigvee_{j\in L^A_i} A_j \leftrightarrow \bigvee_{j\in
      L^B_i} B_j\right\}$$
  be a set of $s=|S|$ equivalence formulas
  over the literal set $L$, and $L_0,L_1\subseteq L$ two sets of
  variables to be fixed. An instance of the \emph{\IFFSAT} problem is of
  the form
  \begin{equation}  
    \label{eq:snsat}
    \bigwedge_{i=1}^s S_i \wedge \bigwedge_{x\in L_0} (\lnot x)\wedge
    \bigwedge_{x\in L_1} x.\tag{\IFFSAT}
  \end{equation}
\end{problem}

Much research has been done to find effective solution algorithms for
subclasses of \\\SAT~\citep{truemper:04}. There is, however, no algorithm
that can efficiently~\citep{GarJohn79} check satisfiability for
arbitrary propositional formulas, as we have to consider for this
application:
\begin{lemma}
  The satisfiability problem for problems of the form~\eqref{eq:snsat}
  is equivalent to 3-\SAT, hence \NP-complete.
\end{lemma}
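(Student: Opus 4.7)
The plan is to prove equivalence with 3-\SAT{} by establishing both directions of a polynomial reduction, giving membership in \NP{} and \NP-hardness.

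For the easier direction, \IFFSAT{} $\leq_p$ 3-\SAT, I would rewrite each equivalence $\bigvee_{j\in L^A_i} A_j \leftrightarrow \bigvee_{j\in L^B_i} B_j$ as the two implications $\bigvee_{j} A_j \to \bigvee_{j} B_j$ and $\bigvee_{j} B_j \to \bigvee_{j} A_j$. The first is already the single clause $\bigvee_j \lnot A_j \lor \bigvee_j B_j$; the second expands, via distribution, into a conjunction of clauses $\lnot B_k \lor \bigvee_j A_j$, one for each $k\in L^B_i$. The fixed-value constraints $\bigwedge_{x\in L_0}(\lnot x)$ and $\bigwedge_{x\in L_1} x$ become unit clauses. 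This yields a \CNF{} formula of polynomial size that is satisfiable iff the original \IFFSAT{} instance is. The standard clause-splitting trick, introducing fresh variables to break a $k$-clause $\ell_1\lor\cdots\lor\ell_k$ into $\ell_1\lor\ell_2\lor z_1$, $\lnot z_1\lor\ell_3\lor z_2$, \ldots, then produces an equisatisfiable 3-\CNF{} formula.

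For the hardness direction, 3-\SAT{} $\leq_p$ \IFFSAT, let $\phi=\bigwedge_{i=1}^{m} C_i$ be an arbitrary 3-\CNF{} formula with $C_i=\ell_{i,1}\lor\ell_{i,2}\lor\ell_{i,3}$ over variables $x_1,\ldots,x_n$. I would introduce fresh propositional variables $y_1,\ldots,y_m$ and build the \IFFSAT{} instance
\[
\bigwedge_{i=1}^m \bigl(\, y_i \;\leftrightarrow\; \ell_{i,1}\lor\ell_{i,2}\lor\ell_{i,3} \,\bigr) \;\wedge\; \bigwedge_{i=1}^m y_i,
\]
taking $L_1=\{y_1,\ldots,y_m\}$ and $L_0=\emptyset$. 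Each equivalence has the required form $\bigvee A_j \leftrightarrow \bigvee B_j$ (with the left side consisting of the single literal $y_i$). Any model must set each $y_i$ to true and hence must satisfy each clause $C_i$, so the \IFFSAT{} instance is satisfiable exactly when $\phi$ is.

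The main obstacle, such as it is, is being careful about the syntactic shape imposed by \IFFSAT: both sides of an equivalence must be disjunctions of \emph{literals}, and the global formula must be the aggregated, reverse-implication-closed set $S$ described before the problem statement. The 3-\SAT$\to$\IFFSAT{} reduction above already respects this, since distinct equivalences use pairwise disjoint right-hand-side literal sets $\{y_i\}$, so the aggregation step of the model has no effect and no extra reverse implications are forced; thus $S$ is exactly what was constructed. With both reductions carried out in polynomial time, the lemma follows from \NP-completeness of 3-\SAT.
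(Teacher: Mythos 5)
Your hardness direction (3-\SAT{} $\leq_p$ \IFFSAT) is correct and genuinely different from the paper's: you attach a fresh indicator $y_i$ to each clause and force it \TRUE{} via $L_1$, whereas the paper chains each 3-clause $x_1\lor x_2\lor x_3$ through two auxiliary equivalences, $(x_1\lor x_2\liff u_1)\land(u_1\lor x_3\liff u_2)\land u_2$. Both reductions are polynomial and sound; the paper's version has the side benefit of producing instances with at most two literals per side (essentially cascade form already), which is what its later structural results care about, while yours is shorter and more transparent. Like you, the paper treats the converse direction as the easy one and dismisses it in a single sentence.

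However, the converse direction as you spell it out contains a concrete error. The implication $\bigvee_{j\in L^A_i} A_j \to \bigvee_{k\in L^B_i} B_k$ is \emph{not} the single clause $\bigvee_j \lnot A_j \lor \bigvee_k B_k$; it equals $\bigl(\bigwedge_j\lnot A_j\bigr)\lor\bigvee_k B_k$, which distributes into the conjunction of clauses $\lnot A_j\lor\bigvee_k B_k$, one for each $j\in L^A_i$ --- exactly symmetric to what you correctly wrote for the reverse implication. The single clause you wrote encodes the strictly weaker statement $\bigwedge_j A_j\to\bigvee_k B_k$, so your translation does not preserve unsatisfiability: for instance $A_1\lor A_2\liff B$ with $A_1$ fixed to $1$ and $A_2,B$ fixed to $0$ is infeasible, yet the clause $\lnot A_1\lor\lnot A_2\lor B$ (together with the backward clause and the unit clauses) is satisfied by that very assignment. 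The fix is immediate, and since membership in \NP{} is trivial anyway the lemma is not endangered, but as written that step fails. A second, cosmetic slip: you justify that the aggregation/reverse-implication convention of the model is vacuous by saying the right-hand-side literal sets are the pairwise disjoint singletons $\{y_i\}$, but in your construction $y_i$ sits on the \emph{left}; either orient the equivalences as $\ell_{i,1}\lor\ell_{i,2}\lor\ell_{i,3}\liff y_i$ or make the disjointness argument about the clause literal sets instead.
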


\begin{proof}
  We only need to show that 3-\SAT{} instances can be written in \IFFSAT{}
  form. Using $\approx$ to designate logical equivalence this can be
  seen as follows:
{\small
\begin{align*}\textstyle
  &\bigwedge_{j\in \mathcal{I}}x^j_{1}\vee x^j_{2}\vee x^j_{3}\\
  \approx& \bigwedge_{j\in \mathcal{I}}\left[(x^j_{1}\vee x^j_{2}\leftrightarrow u^j_{1})\wedge (u^j_{1}\vee x^j_{3})\right]\\
  \approx& \bigwedge_{j\in \mathcal{I}}(x^j_{1}\vee
  x^j_{2}\leftrightarrow u^j_{1})
         \wedge\bigwedge_{j\in \mathcal{I}} (u^j_{1}\vee
         x^j_{3}\leftrightarrow u^j_{2})\wedge
          \bigwedge_{j\in \mathcal{I}}u^j_{2}
\end{align*}}
This is an instance of \IFFSAT{} form in which at most
$2\left|\mathcal{I}\right|$ literals and $2\left|\mathcal{I}\right|$
propositional formulas were added.
\end{proof}

As already known to Dantzig~\citep{dantzig:63}, \SAT{} problems can be
formulated as integer programs, i.e. feasibility or optimization
problems over linear systems of inequalities, where the solutions are
required to be integral. For an overview of this field
see~\citep{bertsimas-weismantel:ip-book}. For the \IFFSAT{} problem the
associated integer program (IP) is constructed by introducing $|L|$
binary variables $x_l$ and their complements $\bar{x}_l$, and
translating each \IFF{} formula into the system
\begin{equation}\label{eq:sn-sat-ip}
  \begin{array}{rl@{\qquad}r}
    \sum_{j\in L^A_i}  x_{A_j} -  x_{B_k}
    &\geq 0&\text{for all $S_i\in S, k\in L_i^B$}\\
    - x_{A_j} + \sum_{k\in L^B_i}  x_{B_k}
    &\geq 0&\text{for all $S_i\in S, j\in L_i^A$}\\
    x_{l} + \bar{x}_l &= 1    &\mbox{$l\in L$}\\
    x_{p} = 1, x_{q} &= 0&
    \ \mbox{$p\in L_1$},\mbox{$q\in L_0$}
  \end{array}
\end{equation}
where we will assume that for non-negated literals $A\in L$ the
variable $x_A$, and for negated literals $\lnot A\in L$, $\bar{x}_A$
has been used in the formulation of the inequalities.

\begin{remark}
  The integer programming formulation~\eqref{eq:sn-sat-ip}
  of~\eqref{eq:snsat} has the form of a \emph{generalized set cover}
  problem
  \begin{equation}
    \label{eq:gsc}
    Ax\geq 1-n(A), \quad x\in\left\{0,1\right\},
  \end{equation}
  where $n(A)$ is the number of negative entries in the corresponding
  row of $A$.
\end{remark}

We illustrate the presented methods with the help of a small example.

\begin{figure}
  \centering
  \hfil%
  \parbox{.55\textwidth}{
    \resizebox{!}{.5\textwidth}{\ifpdf\input{toynet.pdftex_t}\else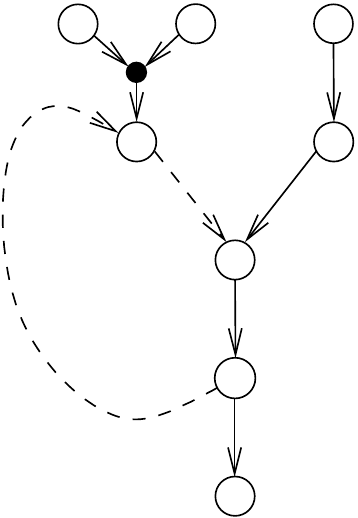\fi}
  }%
  \hfil%
  \parbox{.33\textwidth}{%
    \begin{align*}
      A \land B &\liff D\\
      C &\liff E\\
      \lnot D\lor E &\liff F\\
      F &\liff G\\
      \lnot G &\liff D\\
      G &\liff H\\
    \end{align*}}
  \hfil

  \caption{A small signaling network from Example~\ref{ex:toynet}. The
    dashed lines denote inhibition while the black node means a logic and.}
  \label{fig:toynet}
\end{figure}

\begin{example}\label{ex:toynet}
The inequality description to the network shown in
Figure~\ref{fig:toynet} reads

\begin{equation}\label{eq:toynet}
  \begin{array}{rl@{\quad\quad}rl}
    (1-x_A)+(1-x_B)&\geq (1-x_D) & x_D+x_E&\geq x_F\\
    (1-x_A)&\leq (1-x_D) & x_D&\leq x_F\\
    (1-x_B)&\leq (1-x_D) &  x_E&\leq x_F\\
    x_C&=x_E & (1-x_G)&=x_D\\
    x_F&=x_G &  x_G&=x_H\\
    \multicolumn{4}{c}{x_l\in \{0,1\} \ \forall l}\\
  \end{array}
\end{equation}
Several scenarios can be tested with these inequalities. First of all
certain input and output patterns can be checked for validity. For
this purpose fix $x_A$, $x_B$, $x_C$ and $x_H$ to the desired value
and solve the IP with arbitrary objective value. If it is feasible,
the input/output pattern is a valid assignment. If one is interested in
the output of the network for a prescribed input pattern, one fixes
the inputs to interesting values again and solves the IP with the
objective to maximize $x_H$. The returned objective value is the
value of $H$. In the example the input pattern $x_A=0, x_B=1, x_C=0$
gives and output $x_H=1$.

Another issue for modeling signaling networks is to check the
completeness of the model. This can be done by checking whether the
set described by the inequalities~\eqref{eq:toynet} contains
feasible points. Our example is feasible, as the point $(x_A, x_B,
x_C, x_D, x_E, x_F, x_G, x_H)=(1,1,0,1,0,0,0,0)$ is valid.

Potential failure modes and corresponding suitable intervention
strategies can be found by testing knock-in/knock-out scenarios and
checking if this forces other variables to obtain a specific
value. Knock-in/knock-out scenarios are done by fixing various
variables to a desired value. To test whether other variables are
thereby fixed, we solve different IPs. Two IPs are needed for checking
if two variables have a certain value. In our example we set $x_E=0$
and $x_H=1$. To check whether e.g. $x_D$ and $x_A$ need to be fixed, we
solve the IP with the objective function $x_D+x_A$. The solution is 1
and the variables are $x_D=0$ and $x_A=1$. Thus, we know that $x_D$
must have the value 0. Another optimization problem with the objective
to minimize $x_A$ gives the solution 0, and thus $x_A$ is not
necessarily 1 but can have both values. 
\end{example}

\newpage


\section{Some Complexity Results for IFFSAT}
\label{sec:simplest-ip}
\label{sec:ip-formulations}

The \IFFSAT{} problem becomes easier if certain structural properties are
fulfilled, as has been shown in~\citep{haus-truemper-weismantel:07}.
From now on we will restrict the signaling networks to equivalence formulas with
only one literal on the right-hand-side, i.e. $|L_i^B|=1 \ \forall
i$, unless it is explicitly stated differently.

Initially we have to transform \IFFSAT{} to a special form that
allows us to perform the subsequent analysis.

\begin{definition}[cascade form]
  A signaling network~\eqref{eq:snsat} is called \emph{in cascade
    form} if for all clauses $S_i$ it holds that $|L^A_i|\leq 2$ and
  $|L^B_i|=1$.
\end{definition}

\begin{remark}
  Any signaling network can be transformed to cascade form by
  introducing additional
  literals and equivalence clauses. Indeed, this can be achieved by
  recursively applying the following replacement:
  $$S_i=\Bigg\{%
  \Big(\bigvee_{j\in J_i} A_j\Big)%
  \leftrightarrow%
   B%
  \Bigg\}$$ 
  gets replaced by
  $$
  S_i'=\{A_1\vee A_2\leftrightarrow C\}$$
  and 
  $$\textstyle S_i''=\left\{%
    \Big(\bigvee_{j\in J_i\setminus\{1,2\}} A_j\Big)\vee C%
    \leftrightarrow%
    B%
  \right\}$$
  where $C$ is a new literal.
\end{remark}

Secondly we will review some notation from logic.

\begin{definition}
  Let $S$ be a 3-\CNF{} formula. The undirected \emph{graph G(S)} is
  defined by the variables of $S$ as its nodes and for every
  $2$-clause of $S$ there is an edge between the corresponding nodes.
\end{definition}           

\begin{definition}[cutnode, cutnode condition]\hspace{2cm}
  \begin{enumerate}
  \item Let $a$, $b$ and $c$ be nodes of a graph. We call $c$ an \emph{a/b
      cutnode} 
    if removing $c$ from the graph disconnects the nodes $a$ and
    $b$.
  \item An \IFFSAT{} instance $S$ in cascade form fulfills the \emph{cutnode
      condition} if for every equivalence formula $S_i$, $B_i$ is an
    $A_{1i}/A_{2i}$ cutnode in $G(S)$.  
  \end{enumerate}
\end{definition}                              

After \IFFSAT{} is transformed to cascade form, the cutnode condition
can easily be checked by computing the connected components of $G(S)$.
In~\citep{haus-truemper-weismantel:07} it is proved that
\begin{theorem}
  If an instance of \IFFSAT{} in
  cascade form satisfies the cutnode condition, it can be solved in
  linear time.
\end{theorem}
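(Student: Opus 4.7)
The plan is to exploit the cutnode condition to decompose the instance into subproblems that can be solved by a bottom-up dynamic program on a suitable tree structure, with constant work per clause.

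First I would precompute $G(S)$ and its block-cut tree via Tarjan's algorithm in $O(|S|+|L|)$ time. The cutnode condition states that for every equivalence $S_i = (A_{1i}\vee A_{2i}\leftrightarrow B_i)$, the literal $B_i$ separates $A_{1i}$ and $A_{2i}$ in $G(S)$. This means that the 2-clause sub-network "behind" $A_{1i}$ (i.e.\ the component containing $A_{1i}$ after deleting $B_i$) is disjoint from the one behind $A_{2i}$. Rooting the block-cut tree at a convenient node therefore orients each clause so that $B_i$ sits at the root of two independent subproblems rooted at $A_{1i}$ and $A_{2i}$.

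Next I would run a post-order pass over this rooted structure. For each cutnode $v$, I maintain a Boolean table $T[v][0],T[v][1]\in\{0,1\}$ indicating whether the subproblem hanging below $v$ admits a satisfying assignment when $v$ is forced to 0 or 1 respectively. The recursion is constant-time per clause: the subproblem at $B_i$ is satisfiable with $B_i=0$ iff both child subproblems are satisfiable with $A_{1i}=0$ and $A_{2i}=0$, i.e.\ $T[A_{1i}][0]\wedge T[A_{2i}][0]$; it is satisfiable with $B_i=1$ iff some combination $(a_1,a_2)\neq(0,0)$ satisfies $T[A_{1i}][a_1]\wedge T[A_{2i}][a_2]$. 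The fixings $L_0,L_1$ are incorporated by simply zeroing out the forbidden entries of $T$ at the corresponding leaves. After the pass, the formula is satisfiable iff $T[\text{root}][0]\vee T[\text{root}][1]=1$, and a witnessing model is reconstructed in a second top-down pass. Each equivalence contributes $O(1)$ work, so the total running time is $O(|S|+|L|)$.

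The main obstacle is justifying that the cutnode condition really does yield the clean tree of subproblems this DP relies on: one must argue that the recursive pieces obtained by peeling off a clause at $B_i$ are themselves in cascade form and still satisfy the cutnode condition, so that the DP is well defined all the way down. I would handle this inductively, using the fact that cut vertices of $G(S)$ remain cut vertices of $G(S')$ for each component $S'$ obtained after removing $B_i$; thus the separation property carries through to every subproblem. A secondary technical point is that a literal may play the role of $A_{ji}$ in one clause and $B_k$ in another, so care is needed when orienting the block-cut tree—one must choose an orientation consistent with the clause structure, which is always possible precisely because the cutnode condition makes the $B_i$-nodes the natural "parents" in the decomposition.
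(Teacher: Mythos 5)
The paper itself does not prove this theorem; it only cites \citep{haus-truemper-weismantel:07} for it, so there is no in-paper argument to compare yours against line by line. Judged on its own, your proposal has a genuine gap, and it is exactly at the point you flag as the ``main obstacle'': the claim that the cutnode condition lets you orient every clause so that $B_i$ is the parent of two disjoint subproblems rooted at $A_{1i}$ and $A_{2i}$. This is false. Take $S_1=(A\lor B\liff C)$ and $S_2=(C\lor D\liff A)$. The $2$-clauses give $G(S)$ as the path $B-C-A-D$; $C$ is an $A/B$ cutnode and $A$ is a $C/D$ cutnode, so the instance is in cascade form and satisfies the cutnode condition. But clause $S_1$ requires $C$ to be an ancestor of $A$ in your rooted structure, while $S_2$ requires $A$ to be an ancestor of $C$. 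No rooting of the block-cut tree makes both clauses point ``downward'', so the post-order recursion computing $T[B_i]$ from $T[A_{1i}]$ and $T[A_{2i}]$ is not well founded, and the inductive argument you sketch (that peeling a clause leaves subinstances of the same type) never gets off the ground. The cutnode condition is a local, per-clause separation property; it does not impose a global acyclic parent/child hierarchy on the clauses.

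There is also a second, related soft spot: $G(S)$ records only the $2$-clauses, so a cut vertex of $G(S)$ does not by itself decouple the instance --- every $3$-clause $\lnot B_i\lor A_{1i}\lor A_{2i}$ straddles the two components of $G(S)-B_i$. Your recursion correctly accounts for the one $3$-clause belonging to the clause being peeled, but once the decomposition is not a tree, the ``subproblem hanging below $v$'' is no longer a well-defined object that interacts with the rest only through the value of $v$, and the two-bit summary $T[v][0],T[v][1]$ is not justified. To repair the proof you would need an argument that works without a global tree orientation --- for instance, showing that under the cutnode condition each $3$-clause can be eliminated or resolved so that satisfiability reduces to the $2$-clause part (a $2$-\SAT{} instance, solvable in linear time), which is the flavor of argument the cited reference uses. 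As written, the dynamic program is not well defined on instances like the one above.
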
 

The cut\-node condition is
a restriction to realistic networks but the following example
shows that, e.g., feedback loops do not in general contradict the
condition.
 
\begin{example}\label{ex:cutnode}
  The network defined by 
  \begin{center}
  \begin{tabular}{l@{\quad}l}
    \multicolumn{1}{c}{formula} & \multicolumn{1}{c}{2-clause(s)}\\
    $\hphantom{\lnot}A\lor B\liff C$ & $\lnot A\lor C,\lnot B\lor C$\\
    $\hphantom{\lnot}A\to D$         & $\lnot A\lor D$\\
    $\hphantom{\lnot}D\to E$         & $\lnot D\lor E$\\
    $\lnot E\to A$                   & $\hphantom{\lnot}E\lor A$\\
  \end{tabular}
  \end{center}
\begin{figure}
  \centering
  \resizebox{!}{.15\textheight}{\ifpdf\input{graph-g.pdftex_t}\else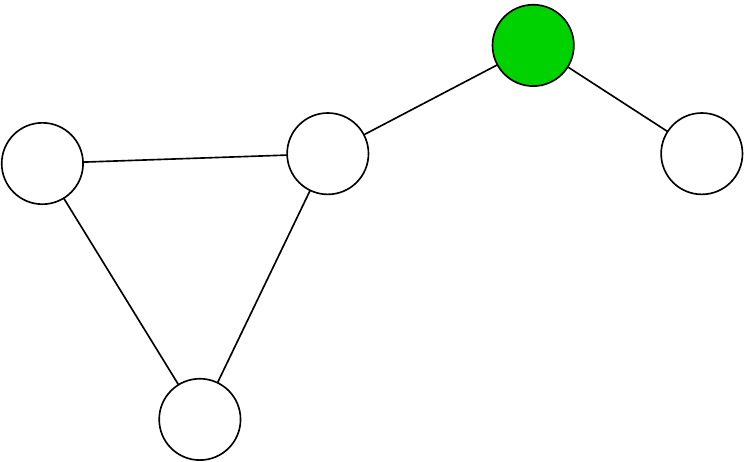\fi}
  \caption{Graph of the network from Example~\ref{ex:cutnode}.}
  \label{fig:cutnode}
\end{figure}

  shown in Figure~\ref{fig:cutnode} has a
  cycle with an odd number of negations, but satisfies the cutnode
  condition (there is one equivalence formula, and its output $C$ is a
  cutnode).
\end{example}

For the inequality description~\eqref{eq:sn-sat-ip} some nice
polyhedral properties can be obtained. One of them is
\emph{unimodularity}~\citep{bertsimas-weismantel:ip-book},
which leads to integral relaxations of the polyhedron, for the
smallest nonempty \IFFSAT{} problems.

\begin{definition}[unimodularity]
  A matrix $\m{A}\in \mathbf{Z}^{m\times n}$ of full row rank is
  \emph{unimodular} if the determinant of each basis of $\m{A}$ is
  $\pm 1$.
  \end{definition}

\begin{lemma}
  The submatrix of a single equivalence clause $S_i$ in the IP
  model~\eqref{eq:sn-sat-ip} is unimodular if the set $L^A_i$
  has cardinality $2$ and $L^B_i$ has cardinality $1$.
\end{lemma}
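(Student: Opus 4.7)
The plan is to write out the three rows of~\eqref{eq:sn-sat-ip} contributed by $S_i = (A_1 \lor A_2 \liff B)$, identify the relevant $3 \times 3$ block, and verify unimodularity by a direct determinant computation.

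First I would specialize~\eqref{eq:sn-sat-ip} to $S_i$ under the assumption $|L_i^A|=2$ and $|L_i^B|=1$. The top block of~\eqref{eq:sn-sat-ip} contributes the single inequality $x_{A_1}+x_{A_2}-x_B \geq 0$, while the middle block contributes the two inequalities $-x_{A_j}+x_B \geq 0$ for $j=1,2$. Restricting the resulting $3 \times n$ submatrix of the full constraint matrix to the three columns indexed by $x_{A_1}, x_{A_2}, x_B$ (the only columns with a nonzero entry in these three rows) yields
$$M \;=\; \begin{pmatrix} 1 & 1 & -1 \\ -1 & 0 & 1 \\ 0 & -1 & 1 \end{pmatrix},$$
and all other columns of the submatrix are identically zero.

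Second, I would verify both requirements of the unimodularity definition. Any $3 \times 3$ column-selection that omits one of $x_{A_1}, x_{A_2}, x_B$ contains at least one all-zero row, hence has determinant $0$ and is not a basis. Therefore, up to permutation of columns, $M$ is the \emph{only} basis of the submatrix. A routine cofactor expansion along the first row gives $\det M = 1\cdot(0+1) - 1\cdot(-1-0) + (-1)\cdot(1-0) = 1$, which is $\pm 1$; in particular, $M$ has rank $3$, so the submatrix has full row rank as well.

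Given that the statement reduces to computing a single $3 \times 3$ determinant, there is no real obstacle; the only point requiring a line of justification is the observation that the zero columns corresponding to variables outside $\{A_1,A_2,B\}$ cannot enter any basis, so that checking unimodularity truly amounts to computing $\det M$.
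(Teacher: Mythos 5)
Your determinant computation is correct and is essentially the same core step as the paper's proof (your $3\times 3$ matrix is in fact the right one; the paper's displayed $U$ has an apparent sign typo in its $(3,3)$ entry, though its determinant is still $-1$). However, your proof covers less than the paper's, in two respects. First, the paper takes the submatrix of a clause to include the complementarity rows $x_l+\bar x_l=1$ for the three literals involved, i.e.\ it proves unimodularity of the $6\times 6$ matrix $M=\left(\begin{smallmatrix} U & 0\\ I & I\end{smallmatrix}\right)$, not just of the $3\times 3$ block $U$. This is not pedantry: the stated motivation for the lemma is integrality of the LP relaxation of~\eqref{eq:sn-sat-ip}, and that relaxation contains the complementarity equations, so they must be part of the matrix whose unimodularity you certify. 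Second, and relatedly, you only treat the case $S_i=(A_1\lor A_2\liff B)$ with all atoms non-negated. The model explicitly allows negated literals (e.g.\ $\lnot D\lor E\liff F$ in Example~1), in which case the IFF rows place their coefficients in the $\bar x$ columns. If one keeps only the three IFF rows this is a harmless column relabeling, but once the complementarity rows are included the relabeling argument no longer suffices, and the paper disposes of these cases by performing unimodular row operations (subtracting the complementarity row from the corresponding IFF row), which preserve unimodularity and reduce every sign pattern to the base case.

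To repair your proof you would need to (i) adjoin the three rows $x_l+\bar x_l=1$ and the three $\bar x$ columns, observe that the resulting $6\times 6$ matrix is block lower triangular with blocks $U$ and $I$, hence has determinant $\det U=\pm 1$ (and, as before, any basis must use exactly these six columns since all others are zero); and (ii) add the one-line remark that a clause containing negated literals is obtained from the all-positive one by the unimodular row operations just described. With those two additions your argument coincides with the paper's.
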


\begin{proof}
  Consider the case where the formula considered is exactly $A\vee B
  \leftrightarrow C$ with non-negated atoms $A,B,C$. The matrix
  $U=\scriptstyle\begin{pmat}
     1 &  1 & -1 \\
    -1 &  0 &  1 \\
     0 & -1 & -1 \\
   \end{pmat}$
   is clearly unimodular. Hence, $M=
   \begin{pmatrix}
     U & 0\\ 1& 1
   \end{pmatrix}$, where $1$ denotes a $3\times3$ unit matrix, is
   unimodular.

   All other cases arise from $M$ by unimodular row operations, i.e.
   subtracting the complementarity constraint in the top $3$ rows.
\end{proof}

Even for the simple formulas $(A\lor B) \liff (C\lor D)$ as well
as $(A\lor B\lor C) \liff D$, the inequality description~\eqref{eq:sn-sat-ip}
is non-unimodular. However, the linear relaxation
of~\eqref{eq:sn-sat-ip} for one equivalence
formula with arbitrary large $L_i^A$ and $L_i^B$ is still integral:
\begin{lemma}
  The inequality description of a single equivalence clause $S_i$ in
  the IP model~\eqref{eq:sn-sat-ip} is integral.
\end{lemma}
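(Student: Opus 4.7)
My plan is to prove integrality by showing that for every linear objective $c^{\top} x + d^{\top} y$, the LP over the single-equivalence-clause polytope attains its maximum at an integer feasible point; this is equivalent to the polytope being the convex hull of its integer points and hence having only integer vertices. Without loss of generality I take all literals to be positive, since the complementarity constraints $x_l + \bar x_l = 1$ amount to an affine change of variables for negated literals. To lighten notation, let $p = |L^A_i|$, $q = |L^B_i|$, and write the variables as $x \in [0,1]^p$ and $y \in [0,1]^q$; the inequalities read $\sum_j x_j \ge y_k$ for every $k$ and $\sum_k y_k \ge x_j$ for every $j$.

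Let $P^+ = \{j : c_j > 0\}$ and $Q^+ = \{k : d_k > 0\}$. If both are nonempty, the integer point $(\mathbf{1}_{P^+}, \mathbf{1}_{Q^+})$ is feasible (each coordinate sum is at least one) and attains the componentwise upper bound $\sum_{j \in P^+} c_j + \sum_{k \in Q^+} d_k$, so it is optimal. If both are empty, then $c, d \le 0$ and the origin is optimal. These two cases are immediate.

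The substantive case is when exactly one of $P^+, Q^+$ is nonempty; by symmetry I take $P^+ \ne \emptyset$ and $Q^+ = \emptyset$, so $d \le 0$. I would introduce $\alpha := \max_j x_j \in [0,1]$. The bound $x_j \le \alpha$ gives $\sum_j c_j x_j \le \alpha \sum_{j \in P^+} c_j$ (terms with $c_j \le 0$ only help), while the constraint $x_j \le \sum_k y_k$ forces $\sum_k y_k \ge \alpha$. Minimizing $\sum_k |d_k|\, y_k$ subject to $\sum_k y_k \ge \alpha$ and $0 \le y_k \le 1$ yields $\sum_k d_k y_k \ge -|d_{k^*}|\,\alpha$, where $k^*$ achieves the smallest $|d_k|$. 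Adding, the objective is bounded above by $\alpha\bigl(\sum_{j \in P^+} c_j - |d_{k^*}|\bigr)$, a linear function of $\alpha \in [0,1]$ whose maximum is attained at $\alpha = 0$ or $\alpha = 1$, both realized by the integer points $\mathbf{0}$ and $(\mathbf{1}_{P^+}, \mathbf{e}_{k^*})$, respectively.

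The main obstacle is this asymmetric case: both $\sum c_j x_j$ and $\sum d_k y_k$ must be controlled by a single scalar parameter so that the total becomes visibly linear. The coupling constraint $\sum_k y_k \ge \max_j x_j = \alpha$ is precisely what makes the $y$-penalty scale linearly with $\alpha$; once this is observed, extremality at an endpoint of $[0,1]$ follows, and both endpoints are achieved by integer solutions, which completes the argument.
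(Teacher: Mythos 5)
Your argument is correct and complete. The reduction of negated literals via $x_l+\bar x_l=1$ is a legitimate affine substitution preserving integrality, the criterion ``every linear objective is maximized at an integral feasible point'' does imply that all vertices are integral (take for each vertex an objective for which it is the unique optimizer), and the three cases on the signs of $c$ and $d$ exhaust all possibilities. The key step in the asymmetric case --- bounding the objective by the linear function $\alpha\bigl(\sum_{j\in P^+}c_j-|d_{k^*}|\bigr)$ of $\alpha=\max_j x_j$, using $\sum_k y_k\ge\max_j x_j$ to make the $y$-penalty scale with $\alpha$, and then matching both endpoints $\alpha\in\{0,1\}$ with the integer points $\mathbf 0$ and $(\mathbf 1_{P^+},\mathbf e_{k^*})$ --- is sound; the minimization of $\sum_k|d_k|y_k$ subject to $\sum_k y_k\ge\alpha$ is correctly resolved at the coordinate of smallest $|d_k|$ since $\alpha\le 1$ keeps that point inside the box.

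Your route is, however, quite different from the paper's: the paper does not prove the lemma at all but delegates it to a citation (Hooker, p.~338), where the integrality of the convex-hull formulation of a single logical equivalence is obtained as an instance of a general theory of convex-hull descriptions of logic constraints. Your proof is self-contained and elementary, trading generality for transparency: it certifies integrality by exhibiting, for each objective, an explicit integral optimum together with a matching upper bound, which doubles as a combinatorial description of where the optima sit ($\mathbf 0$, $(\mathbf 1_{P^+},\mathbf 1_{Q^+})$, or $(\mathbf 1_{P^+},\mathbf e_{k^*})$ and its symmetric counterpart). What the citation buys is brevity and a framework that also covers the neighboring results in the paper; what your argument buys is a verifiable proof inside the paper itself and an explicit certificate of optimality that could be reused, e.g., to read off the dual solutions or to see exactly why integrality fails once two equivalence clauses are coupled as in the negative-feedback example.
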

\begin{proof}
  See~\citep[p. 338]{hooker:07}.
\end{proof}

These integrality results cannot be generalized to \IFFSAT{} problems
with an arbitrary number of equivalence formulas, as the next example
shows. 

\begin{example}\label{ex:negfeed}
  We consider a set of two equivalences that model a small
  negative feedback cycle, i.e.

 $$S=\left\{(x_1\vee\bar{x}_3\leftrightarrow  x_4)
   \wedge (x_2\vee x_4\leftrightarrow x_3)\right\}.$$
  The integer programming formulation from~\eqref{eq:sn-sat-ip} has
  the following LP relaxation:

  \[\small
  \begin{array}{r@{}r@{}r@{}l@{\qquad}r@{}r@{}r@{}l}
    x_1&  + \bar{x}_3 &- x_4 &\geq 0&     x_2& - x_3&+ x_4
    &\geq 0\\
    -x_1&              &+ x_4 &\geq 0&   - x_2& + x_3&
    &\geq 0\\
    &  - \bar{x}_3 &+ x_4 &\geq 0&        & + x_3&- x_4
    &\geq 0\\
    \multicolumn{4}{c}{x_i+\bar{x}_i=1,} &\multicolumn{4}{c}{0\leq\;
      x_i\leq 1.}\\
  \end{array}\] 
  Computing the vertices we find both integral and fractional
  points: 
  \begin{center}
    {\ttfamily\small 
    \begin{tabular}{rrrrr} 
      $x_1$ & $x_2$ & $x_3$ & $\bar{x}_3$ &$x_4$\\ 
       0&   0& 1/2& 1/2& 1/2~  \\
       0& 1/2& 1/2& 1/2& 1/2~  \\   
      1/2&   0& 1/2& 1/2& 1/2~ \\ 
      1/2& 1/2& 1/2& 1/2& 1/2~ \\ 
       0&   1&   1&   0&   0~  \\ 
       1&   0&   1&   0&   1~  \\ 
       1&   1&   1&   0&   1. \\
     \end{tabular}}
  \end{center}
   It should be noted that not all vertices are fractional,
   which means that this \SAT{} instance is not infeasible.
 \end{example} 

\newpage
\section{Dynamics of Signaling Networks}
\label{sec:relaxed-sn}

In practical applications signaling networks often turn out to be
infeasible. This is not reasonable in a biological sense, but it can
occur due to delayed reactions modeled as instantaneous, or due to
modeling errors. Especially (negative) feedback loops with time
delays make the static models infeasible, but at the same time have a
huge impact on the functionality of a signaling network since
certain activation cascades can be enabled initially and switched off
at a later time point to avoid overreaction.
In order to model the dynamics of a signaling network we introduce an
extension of \IFFSAT{}, the \emph{requirement \IFFSAT{} problem}.
\begin{problem}[\RIFFSAT]
  Let a set $R$ of $r=|R|$ equivalence formulas of the form
  \begin{align}
    \label{eq:logic-y-model}
    \textstyle
    R=\left\{R_i=\bigvee_{j\in L^{A}_i} (A_j \land
    y_j)\liff ( B_i \land \bigvee_{j\in L^A_i} y_j )\right\}
  \end{align}
  over the literal set $L\cup Y$ be given, where $Y$ is the set of
  requirement variables. Let $L_0, L_1\subseteq L\cup Y$, sets of fixings,
  be given, then the \emph{\RIFFSAT} problem is to find a satisfying
  solution of
  \begin{equation}  
    \label{eq:riffsat}
    \bigwedge_{i=1}^r R_i \wedge \bigwedge_{x\in L_0} (\lnot x)\wedge
    \bigwedge_{x\in L_1} x.\tag{\RIFFSAT}
  \end{equation}
\end{problem}

In the signaling network context $y_j=1$ denotes influence of the
corresponding component $A_j$ on the right hand side while $y_j=0$
denotes no effect. In case $y_j=0$ for all $j\in L^A_i$, we request
$B_i$ to be free.

\begin{remark}
  We ask the requirement variable $y_j$ of $A_j$ to be the different for every
  \IFF{} formula in which $A_j$ occurs on the left hand side, and to
  be equal to $\bigvee_{k\in L^A_i} y_k$ if $A_j$ is the right hand
  side of clause $R_i$.
\end{remark}

We denote by $\F$ the set of all 0/1-points for
which~\eqref{eq:riffsat} is \TRUE,
i.e.\\ $\F=\left\{x\in\{0,1\}^{2|L|} \ | \
  \Gamma(\bigwedge_{i=1}^r R_i \wedge \bigwedge_{x\in L_0} (\lnot x)\wedge
  \bigwedge_{x\in L_1} x)=1 \right\}$. 

Usually one is interested in solving~\eqref{eq:riffsat} with special
properties on the set of requirement variables $y_j^i $, like a
maximal or minimal models over $y$. Such a problem can be solved by a
variation of \SAT, namely {\small MAXVAR~SAT} (see~\citep{truemper:04}). Here a satisfiable \CNF{} system $C$
and a set $T$ with \TRUE/\FALSE{} fixings of a variable subset is
given, such that $C$ is not satisfiable if all variables are fixed
according to $T$. The task is to determine a maximal subset $T^*$ of
$T$ so that $C$ is satisfiable. In our setting $T$ can be
$\{y_1=\ldots=y_{|L|}=\text{\sc True}\}$. {\small MAXCLS~SAT} is
also a related problem. We remark that both problems are special
cases of {\small MAXSAT}, which can not be approximated polynomially
better than $8/7$~\citep{hastad:01} and hence it is \NP\textit{-complete}.

Besides the presented method from logic one can also find a maximal solution
with the help of integer programming techniques.

\begin{lemma}[inequality description for \RIFFSAT]
  Given one equivalence formula $R_i\in R$ as in~\eqref{eq:logic-y-model},
  introduce $n_i=|L^A_i|$ additional binary variables
  $Y_i=\{x_{y_1},\ldots, x_{y_{n_i}}$. Then the corresponding set of
  feasible points $\F_{R_i}$ can be described by $n_i+n_i\cdot
  2^{n_i-1}$ inequalities plus binary constraints of the form:
  \begin{equation}
    \label{eq:ymodel}
    \begin{split}
      \begin{array}[t]{rl@{\quad}l}
        x_{B_i}&\geq x_{A_j}-(1-x_{y_j})&j\in L^A_i\\
        x_{B_i}&\leq \sum_{k\in S}{x_{A_k}}+(1-x_{y_j})+\sum_{k\notin
          S}{x_{y_k}}& j\in S, \ \emptyset\neq S\subseteq L^A_i \\
        x_l&\in \{0,1\}&\raisebox{-10pt}[0pt][0pt]{$l\in
          L_i^A\cup\{B_i\}\cup Y_i$}\\
        1&=x_l+\bar{x}_l& \\
      \end{array}
    \end{split}
  \end{equation}  
  We will assume that $\bar{x}_l$ is used in the inequality
  description if the corresponding atom is negated. To derive the
  inequality description for~\eqref{eq:riffsat}, introduce such
  inequalities for all $R_i$, $i=1,\ldots,r$, and fix the variables
  according to $L_0$ and $L_1$.
\end{lemma}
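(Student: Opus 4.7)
I plan to prove this by showing that for every binary assignment to the variables $\{x_l\}_{l\in L_i^A\cup\{B_i\}}$ together with $\{x_{y_j}\}_{j\in L_i^A}$, the inequalities in~\eqref{eq:ymodel} are all satisfied if and only if the corresponding truth assignment satisfies $R_i$. Since both sides describe finite subsets of $\{0,1\}^{2n_i+1}$, this equivalence identifies the set cut out by~\eqref{eq:ymodel} with $\F_{R_i}$. The counting of inequalities is then immediate: the first family contributes $n_i$ (one per $j\in L_i^A$); for the second family, each $j\in L_i^A$ is contained in exactly $2^{n_i-1}$ nonempty subsets $S\subseteq L_i^A$, giving $n_i\cdot 2^{n_i-1}$ inequalities.

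For the forward direction, assume $x$ encodes a satisfying assignment of $R_i$. The inequality $x_{B_i}\geq x_{A_j}-(1-x_{y_j})$ is trivial unless $x_{A_j}=x_{y_j}=1$; in that case the left-hand side of $R_i$ fires through $A_j\wedge y_j$, so the right-hand side of the equivalence forces $x_{B_i}=1$. For the second family, fix $j\in S$: the inequality is vacuous whenever $x_{y_j}=0$ (the term $(1-x_{y_j})$ absorbs the bound) or whenever $x_{y_k}=1$ for some $k\notin S$ (since then $\sum_{k\notin S}x_{y_k}\geq 1$). Otherwise $x_{y_j}=1$ and $x_{y_k}=0$ for all $k\notin S$, and the inequality reduces to $x_{B_i}\leq\sum_{k\in S}x_{A_k}$. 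If $x_{B_i}=1$, the right-hand side of $R_i$ is true, so some $A_k\wedge y_k$ must hold; because $y_k=0$ for $k\notin S$, that witness index $k$ lies in $S$, yielding $\sum_{k\in S}x_{A_k}\geq 1$.

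Conversely, suppose the binary point $x$ satisfies every inequality of~\eqref{eq:ymodel}. If the left-hand side of $R_i$ is true, some $j$ has $x_{A_j}=x_{y_j}=1$, so the first inequality forces $x_{B_i}\geq 1$; combined with $x_{y_j}=1$ this yields $B_i\wedge\bigvee_k y_k$, the right-hand side of $R_i$. If instead only the right-hand side is assumed, then $x_{B_i}=1$ and the set $T:=\{k\in L_i^A:x_{y_k}=1\}$ is nonempty; choose any $j\in T$ and apply the second inequality with $S:=T$. Both slack terms vanish, giving $1=x_{B_i}\leq\sum_{k\in T}x_{A_k}$, so some $k\in T$ has $x_{A_k}=1$, which witnesses the left-hand side via $A_k\wedge y_k$.

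The main delicate point, and the step where I would take most care, is to see why the second family needs to be written with all $2^{n_i-1}$ sets $S$ rather than just one. The observation is that for any fixed $y$-pattern only the inequality with $S=T=\{k:x_{y_k}=1\}$ is actually active; all others are automatically deactivated by the $(1-x_{y_j})$ slack (when $x_{y_j}=0$) or by the $\sum_{k\notin S}x_{y_k}$ slack (when $T\not\subseteq S$). This makes every case above reduce to exactly one non-trivial inequality, and the two-directional check proceeds by the direct case analysis just given.
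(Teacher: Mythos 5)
Your proof is correct. Note, however, that the paper itself states this lemma without any proof at all (it is followed directly by a remark), so there is no argument of the authors' to compare against; your direct verification is exactly the argument one would expect to fill that gap. Both directions are handled properly: the first family of inequalities encodes the implication from $\bigvee_j (A_j\land y_j)$ to $B_i$, and your key observation --- that for a fixed $0/1$ pattern of the $y$-variables only the inequality indexed by $S=T=\{k: x_{y_k}=1\}$ (with any $j\in T$) is active, all others being deactivated by the $(1-x_{y_j})$ or $\sum_{k\notin S}x_{y_k}$ slacks --- correctly establishes both that the second family is valid for every satisfying assignment and that it enforces the reverse implication from $B_i\land\bigvee_k y_k$ back to $\bigvee_k(A_k\land y_k)$. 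The count $n_i+n_i\cdot 2^{n_i-1}$ follows as you say, since the pairs $(S,j)$ with $j\in S$ yield pairwise distinct inequalities. The only cosmetic omission is that you work throughout with non-negated atoms; the lemma's convention of substituting $\bar{x}_l=1-x_l$ for negated literals reduces the general case to yours via the complementarity constraints, which is worth one sentence but is not a gap.
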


\begin{remark}
  Note that this formulation still preserves the form of a generalized
  set covering problem. In addition, for each fixed
  $x_y$ the formulation reduces to an instance of~\eqref{eq:sn-sat-ip}.
\end{remark}

Maximizing the sum over all requirement variables $x_{y_j}$ yields one
maximal feasible solution. However, there are many inequalities needed
to describe the feasible points. It turns out that all of them are
needed: 

\begin{lemma}\label{th:facets}
The inequalities for a single relaxed equivalence
clause~\eqref{eq:ymodel} are facets for the convex hull of
their integral points $\F$.
\end{lemma}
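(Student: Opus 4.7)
The plan is the classical one: for each inequality of~\eqref{eq:ymodel}, exhibit $2n_i+1$ affinely independent integral feasible points at which that inequality holds with equality. First I would reduce the ambient space: the equalities $x_l+\bar{x}_l=1$ eliminate the $\bar{x}$ variables, so $\conv(\F_{R_i})$ lives in $\mathbb{R}^{2n_i+1}$ with coordinates $x_{A_1},\ldots,x_{A_{n_i}},x_{B_i},x_{y_1},\ldots,x_{y_{n_i}}$. Full-dimensionality follows by checking that the origin and each of the $2n_i+1$ unit vectors satisfy~\eqref{eq:ymodel}; these checks are immediate since in each case either $x_{B_i}=0$, all $x_{y_l}=0$, or only a single variable is active. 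Hence every facet has dimension $2n_i$.

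For each lower-bound inequality $x_{B_i}-x_{A_j}+x_{y_j}\geq 0$, I would use the tight points: the origin; $e_{A_k}$ and $e_{y_k}$ for each $k\neq j$; $e_{A_j}$; and $e_{A_j}+e_{B_i}+e_{y_j}$. Each lies on the hyperplane by inspection and is feasible by the same checks as above. The resulting $2n_i$ displacement vectors from the origin consist of $2n_i-1$ distinct unit vectors together with $e_{A_j}+e_{B_i}+e_{y_j}$; linear independence follows because the mixed vector is the only one contributing to the coordinates $e_{B_i}$ and $e_{y_j}$.

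The main work is with the $n_i\cdot 2^{n_i-1}$ upper-bound inequalities, indexed by $(j,S)$ with $j\in S$ and $\emptyset\neq S\subseteq L^A_i$. Setting $P^{\ast}=e_{A_j}+e_{B_i}+e_{y_j}$, I propose the $2n_i+1$ candidates $P^{\ast}$, $e_{B_i}$, $e_{y_j}$, $P^{\ast}+e_{A_k}$ for $k\notin S$, $P^{\ast}+e_{y_k}$ for $k\in S\setminus\{j\}$, and $e_{A_k}+e_{y_k}+e_{B_i}+e_{y_j}$ for each $k\neq j$. Tightness of the facet equation $x_{B_i}+x_{y_j}-\sum_{k\in S}x_{A_k}-\sum_{k\notin S}x_{y_k}=1$ is checked by direct arithmetic in each case. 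Feasibility follows from the semantic reading of~\eqref{eq:ymodel}: whenever the set $\{k:x_{y_k}=1\}$ is nonempty the inequalities force $x_{B_i}=\bigvee_{k:x_{y_k}=1}x_{A_k}$, and each candidate is constructed to respect this. For affine independence, subtracting the pure-basis displacements $e_{A_k}$ ($k\notin S$) and $e_{y_k}$ ($k\in S\setminus\{j\}$) from the mixed vectors reduces the displacement matrix to a small block involving only $e_{A_j},e_{B_i},e_{y_j}$ together with vectors of the form $-e_{A_j}+e_{A_k}$ and $-e_{A_j}+e_{y_k}$, whose linear independence is clear coordinate by coordinate.

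The principal obstacle is the feasibility check in the upper-bound case. Each of the $2n_i+1$ candidates must be verified against the full system~\eqref{eq:ymodel} uniformly in the parameters $(j,S)$, and the candidates of the form $e_{A_k}+e_{y_k}+e_{B_i}+e_{y_j}$ with $k\neq j$ activate a second witness $A_k$ for $B_i$; this is the delicate case, since every upper-bound inequality indexed by some $(l,T)$ must still be satisfied. Boundary cases such as $|S|=1$, $|S|=n_i$, and $k\in S$ versus $k\notin S$ need to be tracked carefully so that no constraint is inadvertently violated.
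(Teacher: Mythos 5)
Your overall strategy coincides with the paper's: reduce to the $2n+1$ coordinates $x_{A_1},\dots,x_{A_n},x_{B_i},x_{y_1},\dots,x_{y_n}$, establish full-dimensionality of $\conv(\F)$, and exhibit $2n+1$ affinely independent integral feasible points on each inequality. Your treatment of the upper-bound inequalities is correct and in fact cleaner than the paper's Table~2: each of the points $P^{\ast}$, $e_{B_i}$, $e_{y_j}$, $P^{\ast}+e_{A_k}$ ($k\notin S$), $P^{\ast}+e_{y_k}$ ($k\in S\setminus\{j\}$), $e_{A_k}+e_{y_k}+e_{B_i}+e_{y_j}$ ($k\neq j$) satisfies $x_{B_i}+x_{y_j}-\sum_{k\in S}x_{A_k}-\sum_{k\notin S}x_{y_k}=1$, is feasible by the semantic rule $x_{B_i}=\bigvee_{k:\,x_{y_k}=1}x_{A_k}$ whenever some $x_{y_k}=1$, and affine independence falls out by looking at the $B$ and $y_j$ coordinates first, exactly as you sketch. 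The explicit full-dimensionality check via the origin and the unit vectors is a step the paper omits, and it is needed.

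The lower-bound part, however, contains a genuine error. The constraint $x_{B_i}\geq x_{A_j}-(1-x_{y_j})$ is $x_{B_i}-x_{A_j}-x_{y_j}\geq -1$, not $x_{B_i}-x_{A_j}+x_{y_j}\geq 0$ as you transcribed it; the face in question is $\{x:\ x_{B_i}-x_{A_j}-x_{y_j}=-1\}$, and it does not contain the origin. Of your $2n+1$ candidates, only $e_{A_j}$ and $e_{A_j}+e_{B_i}+e_{y_j}$ lie on this face; the origin and the $2n-2$ unit vectors $e_{A_k},e_{y_k}$ ($k\neq j$) all give $x_{B_i}-x_{A_j}-x_{y_j}=0$. (They are not tight for your mis-stated inequality either, and $e_{A_j}$ actually violates that version, so the construction is internally inconsistent as well.) An integral point on the correct face must have $(x_{A_j},x_{y_j},x_{B_i})\in\{(1,1,1),(1,0,0),(0,1,0)\}$; the repair is to fix, say, $x_{A_j}=x_{y_j}=x_{B_i}=1$ and vary the $2n-2$ free coordinates one at a time (feasibility is preserved since the witness $A_j$ with $y_j=1$ keeps $B_i=1$ legal), then add two affinely independent points with $x_{B_i}=0$, e.g.\ $e_{A_j}$ and $e_{y_j}$ together with one of the $(1,1,1)$-type points. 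This is essentially what the paper's Table~1 does. As written, your proposal does not establish that the lower-bound inequalities are facet-defining.
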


\begin{proof}
  We will show that for every inequality there exist
  $\dim(\mbox{conv}(\F))=2n+1$ affine independent, integral points
  fulfilling the corresponding inequality with equality, where $n=|L^A_i|$.

  $x_{B_i}\geq x_{A_j}-(1-y_j)$: For reasons of symmetry we restrict the analysis to
  $j=1$. The $2n+1$ linearly independent points are displayed in
  Table~\ref{tab:type1-points}.

\begin{table}
    \centering
    \begin{tabular}[htb]{ccccccccccc}
      \toprule
      $x_{A_1}$ & $x_{A_2}$ & $x_{A_3}$ & $\cdots$ & $x_{A_n}$& $x_{B_i}$ & $y_1$ & $y_2$ &
      $ y_3$ & $\cdots$ & $y_n$\\
      \midrule
      \multicolumn{11}{l}{$n$ points exploiting all free $y_j$:}\\
      1 & 0 & 0 & $\cdots$ & 0 & 1 & 1 & 1 & 1 & $\cdots$ & 1 \\
      1 & 0 & 0 & $\cdots$ & 0 & 1 & 1 & 0 & 1 & $\cdots$ & 1 \\
      $\vdots$ &$\vdots$&$\vdots$ &$\ddots$&$\vdots$&$\vdots$&$\vdots$&$\vdots$
      &$\vdots$& $\ddots$&$\vdots$\\
      1 & 0 & 0 & $\cdots$ & 0 & 1 & 1 & 1 & 1 & $\cdots$ & 0 \\
      \midrule
      \multicolumn{11}{l}{$n-1$ points exploiting all free $x_{A_j}$:}\\
      1 & 1 & 0 & $\cdots$ & 0 & 1 & 1 & 1 & 1 & $\cdots$ & 1 \\
      1 & 0 & 1 & $\cdots$ & 0 & 1 & 1 & 1 & 1 & $\cdots$ & 1 \\
      $\vdots$ &$\vdots$&$\vdots$ &$\ddots$&$\vdots$&$\vdots$&$\vdots$&$\vdots$
      &$\vdots$& $\ddots$&$\vdots$\\
      1 & 0 & 0 & $\cdots$ & 1 & 1 & 1 & 1 & 1 & $\cdots$ & 1 \\
      \midrule
      \multicolumn{11}{l}{two possible points with $x_{B_i}=0$:}\\
      1 & 0 & 0 & $\cdots$ & 0 & 0 & 0 & 1 & 1 & $\cdots$ & 1 \\
      1 & 1 & 0 & $\cdots$ & 0 & 0 & 0 & 1 & 1 & $\cdots$ & 1 \\
      \bottomrule
    \end{tabular}
    
\caption{$2n+1$ linearly independent points for $x_{B_i}\geq x_{A_1}-y_1$.}
\label{tab:type1-points}
\end{table}
  
  $x_{B_i} \leq \sum_{j\in S}{ x_{A_j}} + (1-y_l) + \sum_{j\notin S}{ y_j}$:\\
  For an easier notation, let us assume that the first $k$ indices
  belong to the selected set S and $l=1$. In
  Table~\ref{tab:type2-points} the linearly independent points are
  presented.

\begin{table}
    \centering
    \begin{tabular}[htb]{ccccccccccccccccc}
      \toprule
       $x_{A_1}$ & $x_{A_2}$ & $\cdots$ & $x_{A_k}$ & $x_{A_{k+1}}$& $x_{A_{k+2}}$ &
       $\cdots$ & $x_{A_n}$ & $x_{B_i}$ &$y_1$ & $y_2$ & $\cdots$ &
       $ y_k$ & $y_{k+1}$ & $y_{k+2}$ & $\cdots$ & $y_n$\\
       \midrule
       \multicolumn{17}{l}{$n-k+1$ points exploiting all free $x_{A_j}$:}\\
       0 & 0 & $\cdots$ & 0 & 0 & 0 & $\cdots$ & 0 & 0 & 1 & 1 &
       $\cdots$ & 1 & 0 & 0 & $\cdots$ & 0 \\
       0 & 0 & $\cdots$ & 0 & 1 & 0 & $\cdots$ & 0 & 0 & 1 & 1 &
       $\cdots$ & 1 & 0 & 0 & $\cdots$ & 0 \\
       0 & 0 & $\cdots$ & 0 & 0 & 1 & $\cdots$ & 0 & 0 & 1 & 1 &
       $\cdots$ & 1 & 0 & 0 & $\cdots$ & 0 \\
       $\vdots$ & $\vdots$ & $\ddots$ & $\vdots$ & $\vdots$ & $\vdots$ & $\ddots$ &
       $\vdots$ & $\vdots$ & $\vdots$ & $\vdots$ & $\ddots$ & $\vdots$
       & $\vdots$ &  $\vdots$ &  &$\vdots$ \\
       0 & 0 & $\cdots$ & 0 & 0 & 0 & $\cdots$ & 1 & 0 & 1 & 1 &
       $\cdots$ & 1 & 0 & 0 & $\cdots$ & 0 \\
       \midrule
       \multicolumn{17}{l}{$k-1$ points exploiting all free $y_j$:}\\
       0 & 0 & $\cdots$ & 0 & 0 & 0 & $\cdots$ & 0 & 0 & 1 & 0 &
       $\cdots$ & 1 & 0 & 0 & $\cdots$ & 0 \\
       $\vdots$ & $\vdots$ & $\ddots$ & $\vdots$ & $\vdots$ & $\vdots$ & $\ddots$ &
       $\vdots$ & $\vdots$ & $\vdots$ & $\vdots$ & $\ddots$ & $\vdots$
       & $\vdots$ &  $\vdots$ &  &$\vdots$ \\
        0 & 0 & $\cdots$ & 0 & 0 & 0 & $\cdots$ & 0 & 0 & 1 & 1 &
       $\cdots$ & 0 & 0 & 0 & $\cdots$ & 0 \\
       \midrule
       \multicolumn{17}{l}{$k$ points with one $x_{A_j}=1$, $j\in S$ each:}\\
       1 & 0 & $\cdots$ & 0 & 0 & 0 & $\cdots$ & 0 & 1 & 1 & 1 &
       $\cdots$ & 1 & 0 & 0 & $\cdots$ & 0 \\
       0 & 1 & $\cdots$ & 0 & 0 & 0 & $\cdots$ & 0 & 1 & 1 & 1 &
       $\cdots$ & 1 & 0 & 0 & $\cdots$ & 0 \\
       $\vdots$ & $\vdots$ & $\ddots$ & $\vdots$ & $\vdots$ & $\vdots$ & $\ddots$ &
       $\vdots$ & $\vdots$ & $\vdots$ & $\vdots$ & $\ddots$ & $\vdots$
       & $\vdots$ &  $\vdots$ &  &$\vdots$ \\
       0 & 0 & $\cdots$ & 1 & 0 & 0 & $\cdots$ & 0 & 1 & 1 & 1 &
       $\cdots$ & 1 & 0 & 0 & $\cdots$ & 0 \\
       \midrule
       \multicolumn{17}{l}{$n-k$ points with one $y_j=1$, $j\in
         L^A_i\setminus S$ each:}\\
       0 & 0 & $\cdots$ & 0 & 1 & 1 & $\cdots$ & 1 & 1 & 1 & 1 &
       $\cdots$ & 1 & 1 & 0 & $\cdots$ & 0 \\
       0 & 0 & $\cdots$ & 0 & 1 & 1 & $\cdots$ & 1 & 1 & 1 & 1 &
       $\cdots$ & 1 & 0 & 1 & $\cdots$ & 0 \\
       $\vdots$ & $\vdots$ & $\ddots$ & $\vdots$ & $\vdots$ & $\vdots$ & $\ddots$ &
       $\vdots$ & $\vdots$ & $\vdots$ & $\vdots$ & $\ddots$ & $\vdots$
       & $\vdots$ &  $\vdots$ &  &$\vdots$ \\
       0 & 0 & $\cdots$ & 0 & 1 & 1 & $\cdots$ & 1 & 1 & 1 & 1 &
       $\cdots$ & 1 & 0 & 0 & $\cdots$ & 1 \\
       \midrule
       \multicolumn{17}{l}{one possible point with $y_1=0$:}\\
       0 & 0 & $\cdots$ & 0 & 0 & 0 & $\cdots$ & 0 & 1 & 0 & 1 &
       $\cdots$ & 1 & 0 & 0 & $\cdots$ & 0 \\
       \bottomrule
     \end{tabular}

     \caption{$2n+1$ linearly independent points for $x_{B_i}\leq\sum_{j\in S}{
        x_{A_j}} + (1-y_1) + \sum_{j\notin S}{ y_j}$, with $S=\{1,\ldots,k\}$.}
    \label{tab:type2-points}
\end{table}
This concludes the proof.
\end{proof}

Lemma~\ref{th:facets} thus implies that the inequalities that are needed
to describe $\conv(\F)$ definitely include all constraints
in~\eqref{eq:ymodel}. Hence, the number of inequalities is exponential in the
number of inputs of the clause. This motivates to explore alternative,
extended formulations for modeling $\F$ based on a cascade
representation of their underlying \IFFSAT{} instance.
\begin{lemma}
  Each \RIFFSAT{} instance can be transformed to cascade form.
\end{lemma}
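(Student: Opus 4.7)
The plan is to mimic the cascade transformation used for \IFFSAT{} in the earlier remark, but carry the paired requirement variables through the recursion. Observe first that every clause $R_i$ in~\eqref{eq:riffsat} already satisfies $|L^B_i|=1$, since $B_i$ is a single literal on the right hand side. Hence the only task is to reduce $|L^A_i|$ to at most $2$ while preserving both the logical equivalence and the structural requirement from the remark that each $y_C$ associated to a newly introduced literal $C$ equals the disjunction of the $y_k$'s in the clause producing $C$.

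The concrete step I would apply recursively: given a clause
\[
R_i = \bigvee_{j\in J}(A_j\land y_j) \liff \Bigl(B\land \bigvee_{j\in J} y_j\Bigr)
\]
with $|J|\ge 3$, pick two indices $1,2\in J$, introduce a fresh literal $C$ with a fresh requirement variable $y_C$, and replace $R_i$ by the pair
\[
R_i' = (A_1\land y_1)\lor(A_2\land y_2) \liff \bigl(C\land(y_1\lor y_2)\bigr),
\]
\[
R_i'' = (C\land y_C)\lor\bigvee_{j\in J\setminus\{1,2\}}(A_j\land y_j) \liff \Bigl(B\land\bigl(y_C\lor\bigvee_{j\in J\setminus\{1,2\}} y_j\bigr)\Bigr).
\]
Each of these clauses has strictly fewer inputs than $R_i$ (the first has $|L^A|=2$, the second has $|L^A|=|J|-1$), so iterating at most $|J|-2$ times on every original clause produces a finite instance in cascade form.

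To verify equivalence, I would argue that $R_i'$ forces $y_C$ to behave exactly like $y_1\lor y_2$: on solutions, $C\land(y_1\lor y_2)$ equals the value of the left hand side, and by the remark $y_C$ is required to equal $y_1\lor y_2$ (the disjunction of the requirement variables of the clause in which $C$ appears on the right). Substituting $y_C=y_1\lor y_2$ and $C\land(y_1\lor y_2)=(A_1\land y_1)\lor(A_2\land y_2)$ into $R_i''$ recovers $R_i$ after one application of distributivity of $\land$ over $\lor$; conversely any satisfying assignment of $R_i$ extends uniquely to a satisfying assignment of $R_i'\land R_i''$ by setting $C$ and $y_C$ according to the same formulas. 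Projecting out the auxiliary variables therefore preserves $\F$.

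The main obstacle I anticipate is the bookkeeping around the requirement variables, not the logical equivalence itself: one must ensure that $y_C$ is treated as a genuinely fresh variable in $R_i''$ (so that subsequent cascade splits may introduce their own fresh $y$'s for $C$ whenever $C$ reappears on a left hand side), while simultaneously being pinned to $y_1\lor y_2$ by $R_i'$ in accordance with the remark. Once this is set up consistently, termination follows because each application strictly decreases $\sum_i |L^A_i|$ among non-cascade clauses, and correctness follows from the substitution argument above.
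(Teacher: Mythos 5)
Your proof is correct and follows essentially the same route as the paper, which only sketches the argument in one sentence (``the recursive construction is analogous to the \IFFSAT{} case, handling each conjunction as one literal and introducing variables of $y$-type for the artificial variables''). Your version fills in the details of that sketch faithfully, including the essential point that $y_C$ must be tied to $y_1\lor y_2$ via the paper's convention on requirement variables of right-hand-side literals, without which the equivalence would fail.
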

The recursive construction is analogous to the \IFFSAT{} case, handling
each conjunction as one literal and introducing variables of $y$-type
for the artificial variables.

Although transforming the \RIFFSAT{} instance to this specific form
leads to more variables and equivalence formulas, the growth of the
instance is only quadratic. Recall that $r=|R|$ and let $n$ be the
largest number of inputs of all clauses. Then, due to the
transformation, the number of literals and clauses in the \RIFFSAT{}
instance increases by at most $2r(n-2)$ and $r(n-2)$, respectively. In
contrast to that there are $6(r+r(n-2))$ inequalities needed to encode the
signaling network in cascade form compared to $(n+n\cdot 2^{n-1})r$
inequalities in the IP representation of Lemma~\ref{th:facets}. Both
counts leave out binary constraints and fixings. Hence, it is a
reasonable tool to reduce the complexity of the inequality description.\\

So far both approaches, {\small MAXVAR~SAT} as well as IP, provide one
maximal feasible solution. But one is interested in all maximal
solutions with respect to the requirement variables to find suitable
intervention strategies or to plan specific experiments verifying an
actual network structure. In order to find the set of all maximal
models (see Definition~\ref{def:model}) or all maximal feasible
solutions with respect to the y-variables it is necessary to use a
`clever enumeration method'. For this purpose we want to make use of
the joint generation algorithm~\citep{FK96}. This method checks
whether a pair of monotone boolean functions are dual. The problem is
equivalent to finding all maximal models of a monotone \CNF{}
formula.

\begin{definition}[monotone \CNF{}]
  Let $C$ be a \CNF{} expression and let $Z$ be a subset of the
  literals $L$. Then $C$ is called \emph{down-monotone}
  (\emph{up-monotone}) in $Z$ if from
  $m^*(Z,L\setminus Z)$ satisfying $C$ it follows that
  $m(Z,L\setminus Z)$ satisfies $C$ for all $m(Z)\leq m^*(Z)$
  ($m(Z)\geq m^*(Z)$) and $m(L\setminus Z)=m^*(L\setminus Z)$. 
\end{definition}
An integral set is down-monotone in the vector $z$ if $(x, z^*)\in P$ leads
to $(x, z)\in P$ for all $z\leq z^*$.

\begin{remark}
  A \CNF{} formula is down-monotone in $x$ if all literals
  of $x$ have only negative occurrence.
\end{remark}

Note that \RIFFSAT{} is not necessarily down-monotone in y as
the next example shows:

\begin{example}
  We consider the easy \RIFFSAT{} instance
  \begin{equation*}
   Z=(A\land y_A)\lor(B\land y_B)\liff C\land (y_A\lor y_B).
  \end{equation*}
  It is not down-monotone in $Y=\{y_A, y_B\}$ since the truth
  assignment
  \begin{equation*}
    A=1\quad B=0\quad  y_A=1\quad y_B=1\quad C=1
  \end{equation*}
  satisfies $Z$, while
  \begin{equation*}
    A=1\quad B=0\quad y_A=0\quad y_B=1\quad C=1
  \end{equation*}
  does not.
\end{example}

As \RIFFSAT{} instances are non-monotone in general, we apply a
transformation proposed \\by~\citep{kavvadias-sideri-stavropoulos:00}
to monotonize the requirement variables of non-monotone \CNF{}
terms. The method preserves the set of maximal models while the general
models can differ. 
Therefore the positive y-variables are eliminated according to a
recursive resolution procedure:
\begin{enumerate}
\item Expand the \RIFFSAT{} instance $K$ to \CNF{} form.
\item Choose a variable $y_k$ that occurs as positive and negative literal
  in $K$.
\item Divide the clauses into three parts $S_{y_k}\cup
  S_{\bar{y}_k} \cup A_{y_k}$, the set of clauses with occurrence of $y_k$,
  of $\bar{y}_k$, and no occurrence at all. In this context `$\cup$'
  denotes a conjunction. We write
  $K_i=(C_i\lor y_k)$ for $K_i\in S_{y_k}$, $i=1,\ldots,|S_{y_k}|$,
  and  $K_j=(D_j\lor \bar{y}_k)$ for $K_j\in S_{\bar{y}_k}$,
  $j=1,\ldots,|S_{\bar{y}_k}|$. Thus, $C_i$ and $D_j$ are
  disjunctions of all other variables apart from $y_k$.
\item Compute all \emph{resolvents} $R_{y_k}$ of each pair of clauses in
  $S_{y_k}$ and $ S_{\bar{y}_k}$ with respect to $y_k$,
  i.e. $R_{y_k}=\left\{ (C_i\lor D_j) \ | \ K_i\in S_{y_k}, \
    K_j\in S_{\bar{y}_k} \right\}$. 
\item The expression $K_{y_k}= R_{y_k}\cup S_{\bar{y}_k}\cup A_{y_k}$
  is monotone in $y_k$, since $y_k$ has only negative occurrences.
\end{enumerate}

\begin{theorem}[\citealt{kavvadias-sideri-stavropoulos:00}]
  With the above transformation it holds that
  $\mbox{models}(K)\subseteq\mbox{models}(K_{y_k})$ and
  $\maximal(K)=\maximal(K_{y_k})$.
\end{theorem}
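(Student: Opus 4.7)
The plan is to prove the two assertions separately, establishing the containment of all models first and then using it as a stepping stone for the equality of maximal models. Throughout I write $m \models F$ for ``$m$ satisfies the CNF formula $F$'', and I will work directly from the partition $K = S_{y_k} \cup S_{\bar{y}_k} \cup A_{y_k}$ and the definition $K_{y_k} = R_{y_k} \cup S_{\bar{y}_k} \cup A_{y_k}$.

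For $\mbox{models}(K)\subseteq\mbox{models}(K_{y_k})$, take any $m\models K$. The subsets $S_{\bar{y}_k}$ and $A_{y_k}$ appear verbatim in both formulas, so only the resolvents in $R_{y_k}$ need checking. For a resolvent $C_i\lor D_j$, I split on the value of $m(y_k)$: if $m(y_k)=1$ then $m\models D_j\lor\bar{y}_k$ forces $m(D_j)=1$, while if $m(y_k)=0$ then $m\models C_i\lor y_k$ forces $m(C_i)=1$. Either way the resolvent holds, which settles the first inclusion.

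For $\maximal(K_{y_k})\subseteq\maximal(K)$, I take $m^*\in\maximal(K_{y_k})$ and first argue that $m^*\models K$. The only possible failure is a clause $C_i\lor y_k\in S_{y_k}$ with $m^*(y_k)=0$ and $m^*(C_i)=0$. The key observation is then that, for this particular $i$, the resolvent $C_i\lor D_j$ being satisfied forces $m^*(D_j)=1$ for every $j$. Consequently the ``flipped'' assignment $m^{**}$, obtained by changing $y_k$ from $0$ to $1$ and leaving everything else alone, still satisfies every clause of $S_{\bar{y}_k}$ (each $D_j$ holds), every clause of $R_{y_k}$ and $A_{y_k}$ (these do not mention $y_k$), so $m^{**}\in\mbox{models}(K_{y_k})$ with $m^{**}>m^*$. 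This contradicts the maximality of $m^*$, proving $m^*\models K$. Maximality of $m^*$ inside $\mbox{models}(K)$ is then immediate: any strictly larger model $m'\models K$ would, by the first inclusion, lie in $\mbox{models}(K_{y_k})$, contradicting $m^*\in\maximal(K_{y_k})$.

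For the reverse inclusion $\maximal(K)\subseteq\maximal(K_{y_k})$, I take $m\in\maximal(K)$; the first inclusion gives $m\models K_{y_k}$. If $m$ were not maximal there, I extend it to some $m'\in\maximal(K_{y_k})$ with $m'\geq m\neq m'$, using finiteness of the Boolean cube. By the inclusion just proved, $m'\in\maximal(K)$ and in particular $m'\models K$, so $m' > m$ contradicts the maximality of $m$ in $K$. I expect the one subtle point, and the only place where the specific structure of resolvents is used, to be the flipping argument in the third paragraph: it is exactly the ``all $D_j$ hold'' consequence of having a failed $C_i$ that lets the resolution step be reversed without disturbing the other clauses. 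Everything else is a clean bookkeeping exercise on the three-way partition of the clauses.
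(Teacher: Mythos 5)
Your proof is correct. Note that the paper itself offers no proof of this statement: the theorem is imported from the cited reference (Kavvadias--Sideri--Stavropoulos), so there is nothing in the text to compare against step by step. What you supply is the standard Davis--Putnam-style variable-elimination argument, and all three pieces check out. The inclusion $\mbox{models}(K)\subseteq\mbox{models}(K_{y_k})$ is soundness of resolution (case split on $m(y_k)$). The crux, as you correctly identify, is showing that a maximal model $m^*$ of $K_{y_k}$ already satisfies the discarded clauses $S_{y_k}$: if some $C_i\lor y_k$ failed, then $m^*(C_i)=0$ together with all resolvents $C_i\lor D_j$ forces every $D_j$ to hold under $m^*$, so raising $y_k$ from $0$ to $1$ keeps $S_{\bar y_k}$ satisfied and leaves $R_{y_k}$ and $A_{y_k}$ untouched (since $C_i$ and $D_j$ do not mention $y_k$), producing a strictly larger model of $K_{y_k}$ --- contradiction. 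This one-directional flip is exactly where down-monotonization in $y_k$ enters, and it is also why only the containment $\mbox{models}(K)\subseteq\mbox{models}(K_{y_k})$ (not equality) survives: the extra models of $K_{y_k}$ are precisely non-maximal ones with $y_k=0$ where some $C_i$ fails. Your closing bookkeeping (extending a non-maximal model to a maximal one over the finite cube and bouncing the two inclusions off each other) is fine, and it implicitly also covers the degenerate case where $K$ is unsatisfiable, since then $K_{y_k}$ can have no maximal, hence no, model. The only unstated hypothesis is that no clause contains both $y_k$ and $\bar y_k$, which the paper's three-way partition already presupposes.
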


Thus, we can apply this
concept recursively and obtain a \CNF{} expression which is monotone
in the y-variables and hence, we can utilize the joint generation
algorithm~\citep{FK96} to compute the invariant maximal models.

However, it is easy to see that the transformation can lead to an
exponentially larger expression. But in the case of \RIFFSAT{}
instances the number of clauses even decreases.

Monotonizing the requirement variables in one
$R_i\in R$, formula~\eqref{eq:logic-y-model} reduces the clauses to
\begin{align}
  \label{eq:mon-logic}
  \begin{split}
    & \textstyle\bigwedge_{j\in L_i^A} \left((\neg A_j\lor
      \neg y_j  \lor B_i)\land\bigvee_{k\in L_i^A} A_k
      \lor(\neg y_j \lor \neg B_i ) \right)\\
    \approx & \textstyle\left(\bigvee_{j\in L_i^A} (A_j\land y_j)\rightarrow B_i
    \right)\land \left(\bigvee_{j\in L_i^A} (C\land
      y_j)\rightarrow \bigvee_{k\in L_i^A} A_k\right).
  \end{split} 
\end{align}
In terms of inequalities this is modeled by
\begin{equation}\label{eq:mon-ineq}
  \begin{split}
    \begin{array}[t]{rl@{\quad}l}
      x_{B_i}&\geq x_{A_j}-(1-x_{y_j})&\raisebox{-10pt}[0pt][0pt]{$j\in L^A_i$}\\
      x_{B_i}&\leq \sum_{k\in L^A_i}{x_{A_k}}+(1-x_{y_j})&  \\
      x_l&\in \{0,1\}&\raisebox{-10pt}[0pt][0pt]{$l\in
        L_i^A\cup\{B_i\}\cup Y_i$.}\\
      1&=x_l+\bar{x}_l& \\ 
    \end{array}
  \end{split}
\end{equation}
Introducing such constraints for every $R_i\in R$ gives the
monotonized version of~\eqref{eq:riffsat}.  We denote by $\F_{mon}$
the set of integral points fulfilling~\eqref{eq:mon-logic} and thus
also~\eqref{eq:mon-ineq} for each $i=1,\ldots,r$.

\begin{lemma}
  The system~\eqref{eq:mon-logic}, and thus~\eqref{eq:mon-ineq}, are
  monotone in $y$ and have the same maximal y-solutions
  as~\eqref{eq:riffsat} and~\eqref{eq:ymodel}.
\end{lemma}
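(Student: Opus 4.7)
The lemma has two parts, which I would address separately. Monotonicity of~\eqref{eq:mon-logic} in $y$ is immediate by inspection: each requirement variable $y_j$ occurs only as the negative literal $\neg y_j$, so flipping $y_j$ from \TRUE{} to \FALSE{} can only help satisfy clauses in which it appears. The same property is visible in~\eqref{eq:mon-ineq} after rewriting the two constraints involving $x_{y_j}$ as upper bounds
\begin{equation*}
   x_{y_j} \leq 1 - x_{A_j} + x_{B_i},\qquad x_{y_j} \leq 1 - x_{B_i} + \sum_{k\in L_i^A} x_{A_k};
\end{equation*}
decreasing $x_{y_j}$ with all other coordinates held fixed preserves feasibility, so $\F_{mon}$ is down-monotone in $y$.

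For the equality of maximal $y$-solutions I would invoke the Kavvadias--Sideri--Stavropoulos theorem cited just above, once I have verified that~\eqref{eq:mon-logic} really is the CNF produced by successively resolving out the positive occurrences of each $y_j$ from~\eqref{eq:logic-y-model}. To do this I would expand $R_i$ via the decomposition $L\liff R$ with $L=\bigvee_j(A_j\wedge y_j)$ and $R=B_i\wedge\bigvee_j y_j$: the direction $L\to R$ yields, after tautology elimination, the clauses $\neg A_j\lor\neg y_j\lor B_i$, while $R\to L$ produces a family of clauses still containing positive $y$-literals. Computing all resolvents of these positive $y_j$ against the matching $\neg y_j$ clauses and cleaning up by tautology removal and subsumption should leave exactly the clauses $\neg B_i\lor\neg y_j\lor\bigvee_k A_k$. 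Because distinct $R_i$ share no $y$-variables, the argument iterates to the whole system and yields $\maximal(R)=\maximal(K_{mon})$.

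To pass from equality of maximal models to equality of maximal $y$-solutions I would use down-monotonicity one more time: a satisfying assignment of $K_{mon}$ is maximal in the sense of Definition~\ref{def:model} if and only if its $y$-component is coordinate-wise maximal among $y$-extensions of the remaining variables to a satisfying assignment; consequently the $y$-projection of $\maximal(K_{mon})$ coincides with the set of maximal feasible $y$-assignments of~\eqref{eq:mon-ineq}, and by the equality of maximal models with~\eqref{eq:riffsat} the same set describes the maximal $y$-solutions of~\eqref{eq:ymodel}.

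The step I expect to be the hardest is the verification that the full resolution procedure on $R_i$ collapses to~\eqref{eq:mon-logic}: distributing $\bigvee_j(A_j\wedge y_j)$ inside $R\to L$ creates exponentially many intermediate clauses, and one has to check that systematic resolution against the negative $y_j$ literals, together with tautology and subsumption cleanup, leaves precisely the two announced clause families and nothing else. Monotonicity and the projection from maximal models to maximal $y$-assignments are essentially mechanical once this bookkeeping is done.
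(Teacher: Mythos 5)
Your route is genuinely different from the paper's. The paper does not use the Kavvadias--Sideri--Stavropoulos theorem to prove this lemma at all: after the (shared, and correct) inspection argument for down-monotonicity, it establishes equality of the maximal $y$-solutions directly, by writing down the two integer programs over a single clause --- \eqref{eq:ymodel} with objective $\max\sum_i x_{y_i}$ versus \eqref{eq:mon-ineq} with the same objective --- and checking by a case analysis on all possible fixings of the $A$- and $B$-components (no fixing; $A_I=\ve{0}, B=1$; some $A_i=1$ but $B=0$) that the optimal $y$-vectors coincide, using disjointness of the $y$-variables across clauses to reduce to one formula. Your plan to instead certify \eqref{eq:mon-logic} as the exact output of the resolution procedure and then quote the cited theorem is attractive because it would explain \emph{why} the monotonization works rather than just verifying it; but as written it has two gaps. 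First, the step you yourself flag as hardest --- that expanding $R_i$ to \CNF{}, eliminating every positive $y_k$ by resolution, and discarding tautologies and subsumed clauses leaves precisely the two clause families of \eqref{eq:mon-logic} --- is the entire content of the argument and is not carried out; it is true (one can check it for $n=2$ and induct), but without it you have proved nothing, and note that the elimination must be iterated since resolvents of $y_k$-clauses can still contain other positive $y_l$.

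Second, your bridge from equality of maximal models to equality of maximal $y$-solutions is wrong as stated: maximality in the sense of Definition~\ref{def:model} is coordinatewise over \emph{all} variables, so a model can fail to be maximal because an $A$- or $B$-coordinate can be raised, independently of its $y$-part, and the asserted equivalence ``$m$ is maximal iff its $y$-component is maximal among $y$-extensions'' does not hold. The correct bridge uses both conclusions of the cited theorem: if $y^*$ is a maximal feasible $y$ for the monotonized system, extend any model with $y$-part $y^*$ to a maximal model of $K_{mon}$; its $y$-part is again $y^*$ by maximality of $y^*$, and since $\maximal(K)=\maximal(K_{mon})$ this model satisfies $K$, so $y^*$ is feasible (hence, by $\mbox{models}(K)\subseteq\mbox{models}(K_{mon})$, maximal) for $K$; the converse direction is symmetric. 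With that repair and the resolution bookkeeping actually done, your argument would go through, but in its present form the lemma is not yet proved.
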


\begin{proof}
  From the inequalities it is easy to see that it is down-monotone in $y$,
  since pushing one y component to 0, say $x_{y_j}=0$, only relaxes both
  inequalities containing $x_{y_j}$. In particular, this means that
  the same $A$ and $B$ components are feasible for $x_{y_j}=0$ as for
  $x_{y_j}=1$. In the logic formula one can see the down-monotonicity
  in $y$ by expanding the implications to \CNF{}. All y-variables
  occur as negated atoms.

  To prove the accordance of the maximal solutions we can focus on a
  single \RIFFSAT{} formula $\bigvee_{i=1}^n (A_i\land y_i)\liff
  B\land \bigvee_{i=1}^n y_i$, $I=\{1,\ldots,n\}$, because the
  y variables are disjoint for
  every \RIFFSAT{} formula and thus the monotonized version of
  \RIFFSAT{} can be built by monotonizing each formula $R_i$.

  We will show that the following two integer programs~\eqref{eq:proof}
  and~\eqref{eq:proof-mon} have the same optimal solutions with
  respect to the $y$ components.
  \begin{equation}\label{eq:proof}
    \begin{split}
      \begin{array}[t]{crl@{\quad}l}
        \mbox{max}& \multicolumn{2}{l}{\sum_{i=1}^n{x_{y_i}}}\\ 
      \mbox{s.t.} &x_B&\geq x_{A_j}-(1-x_{y_j})&j\in I\\
        &x_B&\leq \sum_{k\in S}{x_{A_k}}+(1-x_{y_j})+\sum_{k\notin
          S}{x_{y_k}}& j\in S, \ \emptyset\neq S\subseteq I \\
        &x_l&\in \{0,1\}&\raisebox{-10pt}[0pt][0pt]{$l\in
          \{A_I, y_I, B\}$}\\
        &1&=x_l+\bar{x}_l& \\
      \end{array}
    \end{split}\tag{V}
  \end{equation}
  and 
  \begin{equation}\label{eq:proof-mon}
    \begin{split}
      \begin{array}[t]{crl@{\quad}l}
        \mbox{max}&\multicolumn{2}{l}{\sum_{i=1}^n{x_{y_i}}}\\
        \mbox{s.t.}& x_B&\geq
        x_{A_j}-(1-x_{y_j})&\raisebox{-10pt}[0pt][0pt]{$j\in I$}\\
        &x_B&\leq \sum_{k=1}^n{x_{A_k}}+(1-x_{y_j})&  \\
        &x_l&\in \{0,1\}&\raisebox{-10pt}[0pt][0pt]{$l\in
          \{A_I, y_I, B\}$}\\
        &1&=x_l+\bar{x}_l& \\ 
      \end{array}
    \end{split}\tag{W}
  \end{equation}
  For an ease of notation we will subsequently use the index $l$ for the
  variable $x_l$. Let $A_I$ and $y_I$ denote $(A_1,\ldots, A_n)$
  and $(y_1,\ldots, y_n)$. Let $\F_V$ and $\F_W$ be the set of
  integral points of problem~\eqref{eq:proof}
  and~\eqref{eq:proof-mon}, respectively, and let  $x^{*L}$ indicate
  the optimal solution of problem (L), L=V,W. Then, $y^{*L}$ denotes
  the corresponding $y$ components. Note that $\F_V\subset\F_W$. 

  {\bfseries Case 1:} If there are no $A_I$, $B$ components fixed to a
    certain value, there exists an $x^{*L}\in\F_L$ with
    $y^{*L}=\ve{1}$. For example, the vector $(A_I, y_I,
    B)=(\ve{0},\ve{1},0)$ is feasible for both IPs. Thus the optimal
    solutions are equal.

  {\bfseries Case 2:} If some $A_I$, $B$ components are fixed to 0 or
  1, there are two critical cases in which there exists no $x^L\in
  \F_L$ such that $y^L=1$ for L=V,W.\\
  The first one is $A_I=\ve{0}$ and $B=1$. Here, $y^{*L}=\ve{0}$ is
  necessary for both L since the second type of inequalities
  of~\eqref{eq:proof-mon}, which are also valid for~\eqref{eq:proof},
  forces $B$ immediately to 0 if one $y_j=1$. Thus, the optimal
  values match.\\
  The second critical configuration is that $A_i=1$ for $i\in
  I_1\subseteq I$ with $I_1\neq \emptyset$ and $A_j=0, \ j\notin I_1$,
  but $B=0$. The optimal y solution is $y_i^{*L}=0$ for $i\in I_1$ and $y_j^{*L}=1, \
  j\notin I_1$, L=V,W, by the first type of inequalities, which is
  valid for both programs. It forbids to lift another $y_i^L$ to 1 as
  this switches $B$ also to 1.

  If $x_{y_j}$ variables are fixed to 0 or 1, it influences both
  objective values equally. Thus, the objective value of the integer
  programs~\eqref{eq:proof} and~\eqref{eq:proof-mon} coincide in every
  possible case which concludes the proof.  
\end{proof}

\begin{lemma}\label{th:integrality-mon}
  The inequality system~\eqref{eq:mon-ineq} describes $\conv(\F_{mon})$
  when substituting $\{0,1\}$ by $[0,1]$.  
\end{lemma}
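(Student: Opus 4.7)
The inclusion $\conv(\F_{mon})\subseteq P_{LP}$, where $P_{LP}$ denotes the polyhedron defined by~\eqref{eq:mon-ineq} after replacing $\{0,1\}$ with $[0,1]$, is immediate because every integer feasible point satisfies the inequalities and $P_{LP}$ is convex. My plan for the reverse inclusion is to show that every vertex of $P_{LP}$ is an integer point and hence belongs to $\F_{mon}$.

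The key structural observation is that each requirement variable $x_{y_j}$ appears in only two inequalities, namely~(1)$_j$ and~(2)$_j$. Written jointly, with $n=|L_i^A|$, these give
\[
\max\!\bigl(0,\,x_{A_j}-x_{B_i}\bigr)\;\le\;x_{y_j}\;\le\;\min\!\bigl(1,\,1-x_{B_i}+\textstyle\sum_k x_{A_k}\bigr),
\]
so for any fixed $(x_A,x_{B_i})\in[0,1]^{n+1}$ the $x_{y_j}$'s are mutually independent and each is confined to an interval. At any vertex $v^{*}=(v_A,v_{B_i},v_y)$ of $P_{LP}$, every $v^{*}_{y_j}$ must therefore attain one of the four endpoint values $0$, $1$, $v_{A_j}-v_{B_i}$, or $1-v_{B_i}+\sum_k v_{A_k}$. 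The next step is a dimension count: letting $k$ denote the number of indices $j$ at which $v^{*}_{y_j}$ is pinned by a simultaneous $y$-bound and a tight (1)$_j$ or (2)$_j$, each such double-tightness contributes one ``extra'' equation in the $x$-variables alone, of the form $v_{A_j}=v_{B_i}$, $v_{A_j}=v_{B_i}+1$, $\sum_k v_{A_k}=v_{B_i}$, or $\sum_k v_{A_k}=v_{B_i}+1$. For the $2n+1$ tight constraints at a vertex to be linearly independent, exactly $n+1-k$ further tight box-bounds on $x_A$- or $x_{B_i}$-coordinates must be present, pinning those coordinates to $0$ or~$1$.

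I expect the main obstacle to be the linear-dependence bookkeeping in this count: combining a (1)$_{j_1}$-tight row with a (2)$_{j_2}$-tight row (each with its own $y$-bound) often reproduces the same $x$-equation, so the naive double-tightness count can overstate the rank contribution and additional box-bounds are silently forced into the tight set. Two observations will handle this: (i)~at most one ``sum''-type equation $\sum_k v_{A_k}=v_{B_i}+c$ can be tight simultaneously, since two with different constants would yield the contradiction $v_{B_i}=v_{B_i}+1$; and (ii)~the coefficient matrix of the surviving extra $x$-equations on the free coordinates has $\pm 1$ entries in a simple incidence pattern and is readily verified to be unimodular. Combined with the $n+1-k$ integer box-bounds, this forces $(v_A,v_{B_i})\in\{0,1\}^{n+1}$, and then each $v^{*}_{y_j}$, being one of the four endpoint expressions evaluated at integer $x$ and clamped to $[0,1]$, is itself $0$ or $1$. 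Hence $v^{*}\in\F_{mon}$, establishing $P_{LP}\subseteq\conv(\F_{mon})$.
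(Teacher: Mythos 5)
Your overall strategy (show every vertex of the LP relaxation is integral, by projecting out the $y$-variables and doing a rank count on the tight constraints) is genuinely different from the paper's proof, which instead assumes a fractional vertex and, in an extensive case analysis over which inequalities are tight, exhibits it explicitly as a convex combination of integral feasible points. However, your proposal has concrete gaps that go beyond "bookkeeping". First, the interval you write for $x_{y_j}$ is wrong: rearranging $x_{B_i}\geq x_{A_j}-(1-x_{y_j})$ gives $x_{y_j}\leq 1+x_{B_i}-x_{A_j}$, an \emph{upper} bound, so both non-box constraints containing $x_{y_j}$ are upper bounds and the only lower bound is $x_{y_j}\geq 0$; the claimed lower endpoint $x_{A_j}-x_{B_i}$ does not exist. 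This propagates into your list of ``extra equations'': the correct eliminated equations are $v_{A_j}=v_{B_i}$, $v_{A_j}=v_{B_i}+1$, $\sum_k v_{A_k}=v_{B_i}$, $\sum_k v_{A_k}=v_{B_i}-1$, and — a case you omit — when $(1)_j$ and $(2)_j$ are simultaneously tight for the same $j$ without a box bound on $y_j$, eliminating $y_j$ yields $v_{A_j}+\sum_k v_{A_k}-2v_{B_i}=0$, whose coefficients are not $\pm1$.

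Second, the claim on which the whole argument rests — that the surviving system of extra equations is ``readily verified to be unimodular'' — is false. Take the rows $e_j-e_B$ for $j=1,\dots,n$ (from $y_j=1$ and $(1)_j$ tight) together with $\mathbf{1}-e_B$ (from some $y_{j_0}=1$ and $(2)_{j_0}$ tight): the resulting $(n+1)\times(n+1)$ matrix has determinant $n-1$, so for $n\geq 3$ it is not unimodular. In these configurations integrality is in fact rescued only by the $[0,1]$ bounds and feasibility (e.g.\ $\sum_k v_{A_k}=v_{B_i}-1$ forces $v_A=\ve{0}$, $v_{B_i}=1$ outright), not by unimodularity, and you would need to argue each such non-unimodular configuration separately — which is essentially the case analysis the paper carries out by other means. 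As written, with the sign error, the missing tight-pair case, the false unimodularity claim, and the rank accounting explicitly left open, the proposal is a plan rather than a proof.
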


A proof for the Lemma can be found in the Appendix.
Note that the system~\eqref{eq:mon-ineq} are even a facet
description for $\conv(\F_{mon})$ as the
inequalities~\eqref{eq:ymodel} are facets for $\conv(\F)$ and all
feasible points are preserved. 

The LP relaxation of a monotonized dynamic IP~\eqref{eq:mon-ineq}
containing more than one formula, i.e. $r\geq 2$,  is not integral in
general as the following example shows.

\begin{example}
  Consider the monotonized version of the dynamic
  IP~\eqref{eq:mon-ineq} of the signaling network from
  Example~\ref{ex:negfeed}. Its linear relaxation is the following
  polyhedron
  \begin{equation*}
    \begin{array}{rl@{\quad\quad}rl}
      x_1-(1-y_1)&\leq x_4 &  x_2-(1-y_2)&\leq x_3\\
      (1-x_3)-(1-y_3)&\leq x_4 &  x_4-(1-y_4)&\leq x_3\\
      x_1+(1-x_3) + (1-y_1)&\geq x_4 & x_2+x_4+(1-y_2)&\geq x_3\\
      x_1+(1-x_3) + (1-y_3)&\geq x_4 &  x_2+x_4+(1-y_4)&\geq x_3\\
      \multicolumn{4}{c}{0\leq x_i, y_i\leq 1 \ \mbox{for } i=1,\ldots, 4}\\
    \end{array}
  \end{equation*}
  This polyhedron is not integral, since e.g. $(\ve{x}, \ve{y})=(1/2,
  1, 1/2, 1/2, 1, 1/2, 1, 1)$ is a vertex. But it also has integral
  vertices like $(0,0,1,0,1,0,1,0)$ so that it is not infeasible.
\end{example}

To illustrate the use of the dynamic model we give a small example of a
signaling network. 
\begin{figure}
  \hfil
  \parbox{0.65\textwidth}{
    \resizebox{!}{.3\textwidth}{\ifpdf\input{2-feedbacks.pdftex_t}\else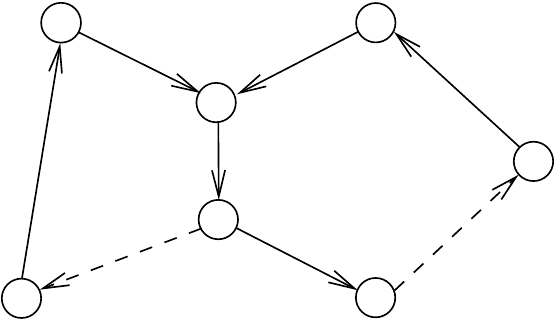\fi}}
  \hfil
  \parbox{0.33\textwidth}{
    \begin{align*}
      A &\liff C\\
      B &\liff C\\
      C &\liff D\\
      \lnot D &\liff E\\
      E &\liff A\\
      D &\liff F\\
      \lnot F &\liff G\\
      G &\liff B
    \end{align*}}
  \hfil
    
  \caption{A signaling network with two negative feedback loops from
    Example~\ref{ex:feedbacks}. The dashed line denotes inhibition.}
  \label{fig:feedbacks}
\end{figure}

\begin{example}\label{ex:feedbacks}
Consider the network defined by
\begin{equation*}
  \begin{array}{rl@{\quad\quad}rl}
    x_C&\geq x_A-(1-y_A) & x_A&\geq x_E-(1-y_E)\\
    x_C&\leq x_A+(1-y_A) & x_A&\leq x_E+(1-y_E)\\
    x_C&\geq x_B-(1-y_B) & x_F&\geq x_D-(1-y_D)\\
    x_C&\leq x_B+(1-y_B) & x_F&\leq x_D+(1-y_D)\\
    x_D&\geq x_C-(1-y_C) & x_G&\geq (1-x_F)-(1-y_F)\\
    x_D&\leq x_C+(1-y_C) &  x_G&\leq (1-x_F)+(1-y_F)\\
    x_E&\geq (1-x_D)-(1-y_D) & x_B&\geq x_G-(1-y_G)\\
    x_E&\leq (1-x_D)+(1-y_D) & x_B&\leq x_G+(1-y_G)\\
    \multicolumn{4}{c}{x_l, y_l\in \{0,1\} \ \forall l}\\
  \end{array}
\end{equation*}
which is illustrated in Figure~\ref{fig:feedbacks}.
As there are two negative feedbacks involved, using the static
approach leads to infeasibility. Thus, we are interested in the
dynamics like possibly late reactions, i.e. in maximal feasible
subnetworks. In this case there are different possibilities to disturb
the cycles. One can either set $y_C=0$ and the rest to 1 or cut two
disjoint arcs each in one cycle, e.g. $y_E=0$ and $y_G=0$. The
remaining variables can then be 1. With this guideline and precise
experiments the practitioner can then identify how the network
structure actually looks like. Of course, the obtained y values also
give hints at suitable intervention strategies.
\end{example}

\begin{remark}
  Note that one can also model timing information using the
  \RIFFSAT{} formulation: Let $T\in Z_+$, a time horizon, be
  given. Then, for every $t\in \{1,\ldots,T\}$ we have different
  signaling networks, because of distinct reaction times. Due to these
  delays it is possible for a molecule to be absent at one point
  $t_1$, but to be present at $t_2$. This can be modeled by $T$
  copies of the same networks but each with a different $y^t$ vectors
  $t=1,\ldots,T$, which imply absence or presence of each
  interdependence in the biological unit at time point $t$. Each $y^t$
  vector is according to Equation~\eqref{eq:ymodel}. In this setting
  the changes of the signaling network over time is encoded by the
  difference of two consecutive $y^t$ vectors, $y^t- y^{t-1}$, $t\geq
  2$. 

  Our proposed approach is also able to handle extra information
  about the structure of the network over time. 
  Such information can be statements like if an interactions is
  present at time point $t^*$, it is present for all $t\geq t^*$, or
  an interaction is only valid for exactly one $t\in\{1,\ldots,T\}$.
  It can be expressed in terms of logical formulas/inequalities over the
  $y^t$. 
\end{remark}

\newpage
\section{Computations}

We have tested the ideas presented in this paper on
the \TCR-signaling network\\ from~\citep{immu-2}, as well
as several randomly generated signaling networks. To demonstrate
that the integer programming formulation is in fact useful to test
feasibility of scenarios (on the full model or in knock-in/knock-out
tests), we performed several tests: Table~\ref{tab:tcr-i/o} shows
feasibility tests for all possible combinations of input/output values
in the \TCR{} model. Computations were performed using \CPLEX~9.1 and
Allegro Common Lisp 8.1 on a \SUN-Fire-V890 with 1.2~GHz. Even though
the IPs do not generally reduce to an LP description, the whole
instances could be solved within two minutes.

\begin{table*}
  \centering
  \begin{tabular}{cccccccc}
    \toprule
    cols & rows & inputs & outputs & \#feas & \#infeas & total time (s) & avg time (s)\\
    \midrule
     214 & 376  &   3    &   14    &   36   &  131036  & $\approx 120$  & 0.001  \\
    \bottomrule
  \end{tabular}
  \caption{In/Out fixing of \TCR{} from~\citep{immu-2}.}
  \label{tab:tcr-i/o}
\end{table*}

Due to the lack of real instances we generated three types of random
signaling networks which are 
built with different probability distributions so that each has a
distinct structure. For technical reasons we deviate slightly from the
standard form of a signaling network: we construct \IFF formulas of the form
$\bigvee_{j\in J}A_j\to B$ (\OR-clauses) and $\bigwedge_{j\in J}A_j\to B$
(\AND-clauses). All types are constructed according to the following procedure:

\begin{enumerate}
\item Set the number of components $n$ and operation nodes $m$ and fix
  the proportion of \AND- and \OR-nodes $A/O$. In our random types $A/O$
  will always be equal to 0.5.
\item Determine, with respect to $A/O$, which operation nodes are \AND-
  and which are \OR-nodes. Therefore the realization of a random
  variable $X$ which probability distribution reflects the ratio $A/O$
  is generated. Clearly this means, $P(X=AND)=A/O=1-P(X=OR)$.
\item List all components from $1$ to $n$.
\item Let $a_i$ be the number of inputs of operation node
  $i=1,\ldots,m$. Determine $a_i$ as a realization of $A_i$ with
  $A_i\sim R$ with some probability distribution $R$. Note that $R$
  changes for the different generated types.
\item Choose the input and output components for each operation node
  separately by generating the realization $k$ of a random variable
  $K$, which is uniformly distributed on the set
  $\left\{1,\ldots,n\right\}$. The resulting number $k$ refers to the
  index of the components. 
  If a component is selected more than once, the duplicates are
  deleted leading to a smaller input degree. Note that for the
  selection of the output component of operation node $i$, the input
  components of $i$ must not be taken into account.
\item Decide if $S_k$ or $\overline{S}_k$ is assigned as an input,
  with a random number taking two equally probable values. 
\end{enumerate}

For the first type of random networks, we generate the number of
inputs according to a \\$\chi^2_f$~-~distribution. The parameter of this
distribution $f$, the degrees of freedom, equals the mean of
$A_i$. Hence, we set $A_i$ to be $\chi^2_3$- distributed with three
degrees of freedom in order to derive a `lean' signaling network
similar to the \TCR{} network with three inputs per operation node on
average. Results are displayed in Table~\ref{tab:type1}. The number of
molecules is denoted by `\# subs', `\# ops' denotes the number of
AND/OR operations used and `min/max/avg in' means the
minimal/maximal/average number of elements on the left hand side of all
\IFFSAT{} formulas $S_i$. The number of inputs and outputs to
the network are stated as `sources' and `sinks'. The size of the IP is
depicted in the following two rows. The IPs could be solved very fast
even in large cases and without any branch and bound nodes used. The number of
paths between sources and sinks is relatively large, there are, e.g., $14$,
$15$, $33$ and $21424$ paths. 

\begin{table}
  \centering
  \begin{tabular}{lrrrrrrrr}
    \toprule
             &10-5a&10-5b&20-10a&20-10b&100-50a&100-50b&100-100a&100-100b\\
\midrule
\# subs      &10   &10      &20     &20     &100    &100    &100   &100\\
\# ops       &5    &5       &10     &10     &50     &50     &100   &100\\
min in       &  1  &   1    & 1	    &	1   &1	    &   1   &1     & 1\\
max in       & 5   & 5      & 6	    &	7   &15	    &   9   &	11 & 12\\
avg in      & 3.2 & 3      & 3	    &2.6    &3.96   &   3.44&3.27  & 3.48\\
\# sources   &6    &   5    &	7   & 6     &38     &  30   &33    & 34 \\
\# sinks     & 1   & 1      & 2     & 3     & 6     &  5    & 3    & 2\\
\# variables &	26 &20      &  40   & 46    & 208   & 214   & 252  & 262 \\
\# rows      & 38  & 30     & 58    &  63   & 354   & 335   & 566  & 599 \\
time (sec)   &0.010&  0.000 & 0.000 & 0.000 & 0.010 & 0.010 & 0.010& 0.020 \\
\# B\&B nod  &  0  &   0    &   0   &    0  &   0   &   0   &  0   &0\\
feas/infeas  &  f  &   f    &   f   &   f   &   f   &   f   &  f   &f \\
\bottomrule
 \end{tabular}
 
 \caption{Overview of the properties of type I networks, which
   are specified by number of components -- number of operations. We
   always generated two networks, a and b, of the same size.}
 \label{tab:type1}
\end{table}

Additionally the shortest path between sinks and
sources has length $2$, even in the large networks. This is not a
realistic network structure. To avoid this feature we force the
network to hold more layers. This is done by generating several type I
networks, such that the sinks of the previous network are the sources
of the next one. Type II networks contain three Type I networks
glued together, which is reasonable as the shortest path in the \TCR{}
model is 5. Type II networks are similar to the real model with
respect to input degree number of paths and path length. Computations
are listed in Table~\ref{tab:type2} and are as fast as in the Type I
case. The number of paths from input to output layer decreases to $2$,
$1$, $11$, $422$ and $9243$, while its average length increases. Note
that the number of components and operations characterizing the
different networks differ from the entries in the table since they
specify the amount of components and operations in each Type I network.

\begin{table}
\centering
\begin{tabular}{lrrrrrrrr}
\toprule
            &5-1a   &5-1b  &10-5a &10-5b  &30-10a&30-10b &50-25a  &50-25b\\
\midrule
\# subs     &15     & 15   & 30   & 30    &90    &90     &150    &150\\
\# ops      & 3     & 3    & 15   & 15    & 30   &30     &75     &75 \\
min in      &1      &  1   & 0    & 1     &   0  &  0    &  0    & 0\\
max in      & 3     & 1	   & 6	  &  6	  &   5	 &  8    &8      & 9\\
avg in     & 1.67  & 1    & 1.6  &  1.87 &  2.13& 2.13  & 2.49  &	2.45\\
\# sources  & 2	    & 1	   &  8	  &  4	  &  12	 & 14    &  30   & 29\\
\# sinks    & 1	    &  1   &  4	  &  2    &  2   &  5	 &  10   & 13\\
\# variables& 30    &  30  &64	  & 74	  & 184	 & 184   & 318   & 322 \\
\# rows     & 22    & 22   & 74   & 70    & 186  & 186   & 428   & 429 \\
time (sec)  & 0.000 & 0.000& 0.000& 0.000 &0.010 & 0.010 & 0.010 & 0.010 \\
\# B\&B nod &0      &0     &0     &0      &0     &0      &0      &0 \\
feas/infeas &  f    &   f  &  f   &   f   &  f   &   f   &   f   & f  \\
\bottomrule			
\end{tabular}

\caption{Overview of the properties of type II networks, which
  are specified by number of components - number of operations.  We
   always generated two networks, a and b, of the same size.}
\label{tab:type2}
\end{table}

After the first two types were constructed similarly, the third type
is generated distinctly. We set $R:=U\left[1,15\right]$, i.e. $A_i$ is
uniformly distributed on the interval $\left[1,15\right]$. To obtain
integral input degrees, the generated random number is rounded up to
the next integer. This results in an average input number of
eight. Hence, the networks have comparably many arcs and are more or
less `thick'. The number of paths between sink and source are like in
Type I networks quite high, e.g. 24, 51, 26 494. In
Table~\ref{tab:type3} the computational results are displayed.

\begin{table}
\centering
\begin{tabular}{lrrrrrrrr}
  \toprule
             &25-5a  &25-5b  &50-10a &50-10b &100-20a&100-20b&150-50a&150-50b\\
\midrule
\# subs      &  25   &  25   &  50   &  50   &  100  &  100  & 150  & 150\\
\# ops       &  5    &  5    &  10   &  10   &  20   &  20   & 50   &50\\
min in       & 3     & 2     &	1    &   1   &  1    &  1    &   1  & 1\\
max in       & 9     & 10    &	13   &  10   &  14   &	13   &  14  & 14\\
avg in      & 	6.2  & 7     &	6.3  &4.8    &  7.2  &  7.6  & 7.24 & 7.1\\
\# sources   &	11   &	9    &	20   &	13   &	42   & 42    &  75  & 73\\
\# sinks     &	1    &	2    &	5    &	2    &   6   &  4    &  4   & 2\\
\# variables &	50   &	50   &	100  &	100  &	200  &204    & 306  &308\\
\# rows      &  65   &  61   &   123 &  108  & 263   & 277   & 568  & 561\\
time (sec)   &	0.000&  0.000& 0.000 & 0.010 & 0.000 & 0.010 & 0.010& 0.010\\
\# B\&B nod  &   0   &   0   &   0   &   0   &   0   &   0   &  0   &0\\
feas/infeas  &   f   &   f   &   f   &   f   &   f   &   f   &  f   &f \\
\bottomrule			
\end{tabular}

\caption{Overview of the properties of type III networks,
  which are specified by number of components -- number of operations. We
   always generated two networks, a and b, of the same size.}
\label{tab:type3}
\end{table}

Next, we employed the joint-generation method \citep{FK96} as
implemented in ~\citep{cl-jointgen} to compute all minimal
infeasible and maximal feasible vectors $y$. The feasibility oracle
employed in this algorithm is solving integer feasibility problems of
the form~\eqref{eq:ymodel}.


\begin{table*}
  \centering
  {\def~{\hphantom{.5}}
  \begin{tabular}{ccc@{\qquad}rrrr@{\qquad}rrrr@{\quad}rr}
    \toprule
    TCRB & CD4 & CD28 &
    \multicolumn{4}{c}{max feas}&
    \multicolumn{4}{c}{min infeas}&
    \# oracle calls&
    time\\
    \cmidrule{4-7}\cmidrule{8-11}
      &   &   & \# & max & avg & min & \# & max & avg & min\\
    \midrule
    0 & 0 & 0 &  1 & 324 &  324~ & 324 & 0  &     &       &     &    1 & 0\\
    0 & 0 & 1 &  1 & 322 &  322~ & 322 & 2  & 323 & 323~  & 323 &  974 & 0\\
    0 & 1 & 0 &  1 & 322 &  322~ & 322 & 2  & 323 & 323~  & 323 &  974 & 0\\
    0 & 1 & 1 &  1 & 320 &  320~ & 320 & 4  & 323 & 323~  & 323 & 1619 & 0\\
    1 & 0 & 0 &  2 & 323 & 322.5 & 322 & 2  & 322 & 322~  & 322 & 1298 & 0\\
    1 & 0 & 1 &  2 & 321 & 320.5 & 320 & 4  & 323 & 322.5 & 322 & 1943 & 0\\
    1 & 1 & 0 &  2 & 321 & 320.5 & 320 & 4  & 323 & 322.5 & 322 & 1943 & 0\\
    1 & 1 & 1 &  2 & 319 & 318.5 & 318 & 6  & 323 & 322.3 & 322 & 2584 & 0\\
    \bottomrule
  \end{tabular}
  }
  \caption{Computation of minimal infeasible and maximal feasible
    vectors $y$ on the TCR model
    from~\citep{immu-2}: Comparison of the cardinalities
    and computation time, and the support of the 
    $324$-dimensional vector of $y$-variables for all $8$ patterns of
    input signals. All timings below measurement threshold of $10$\,ms.}
  \label{tab:jg-y}
\end{table*}

\subsection*{Acknowledgments}
The first, second and fourth author were supprted by the FORSYS
initiative of the German Ministry of Education and Research through
the Magdeburg Center for Systems Biology (MaCS) and the Research Focus
Program Dynamic Systems funded by the Kultusministerium of
Saxony-Anhalt.


\bibliographystyle{jqt1999}

\bibliography{med,iba-bib,signalling,joint-generation}

\newpage

\appendix

\section{Appendix}

\begin{proof}[Proof of Lemma~\ref{th:integrality-mon}]
  Without loss of generality let $I=\{1,\ldots,n\}$ implying the
  \RIFFSAT{} formula
  $\bigvee_{i=1}^n (A_i\land y_i)\liff B\land \bigvee_{i=1}^n y_i$,
  $I=\{1,\ldots,n\}$. 
  Thus, let
  \begin{align*}
    \F_{mon}=\left\{ \left(
        \begin{array}{c}
          \ve{A} \\ \ve{y}\\ B\\
        \end{array}\right) \in \left\{0,1\right\}^{2n+1}: \right. 
    B & \left.\geq A_i-(1-y_i) \hspace{0.5cm}\forall \ i=1,\ldots,n \right. \\
    B &\leq\left. \sum_{j=1}^n{ A_j} + (1-y_i) \hspace{0.5cm} \forall
      \ i=1,\ldots,n \right\}\\ 
  \end{align*}
  and
  \begin{align*}
    P=\left\{ \left(
        \begin{array}{c}
          \ve{A} \\\ve{y}\\ B\\
        \end{array}\right) \in \left[0,1\right]^{2n+1}:\right. 
    B & \left. \geq A_i-(1-y_i) \hspace{0.5cm}\forall \ i=1,\ldots,n \right. \\
    B & \leq\left. \sum_{j=1}^n{ A_j} + (1-y_i) \hspace{0.5cm}
      \forall \ i=1,\ldots,n \right\}.\\
  \end{align*}
  We need to show that $P=\conv(\F_{mon})$. It is clear that
  $\conv(\F_{mon})\subseteq P$. For the converse, assume that
  $P$ has a fractional vertex $\ve{f}=(\ve{A}, \ve{y}, B)$. Since for a
  vertex $2n+1$ inequalities have to be tight, $\ve{f}$ has at least one
  integral component (there are only $2n$ inequalities without box
  conditions). Let $n_f$ be the number of fractional components of
  $\ve{f}$; $0\leq n_f \leq 2n$. It is to show that $n_f>0$ gives a
  contradiction.
  
  Let $k_1$ be the number of tight inequalities of the first type and
  $k_2$ the one of type two inequalities, $k_1+k_2=n_f$. $I_j$ denotes
  the corresponding equality index set of type $j=1,2$. To show the
  Lemma we will distinguish between
  three cases:
  \begin{enumerate}
  \item[(i)] $k_1>0 \land k_2>0$,
  \item[(ii)] $k_1=n_f>0 \land k_2=0$,
  \item[(iii)] $k_1=0 \land k_2=n_f>0$.
  \end{enumerate}

  {\bfseries Case (i):} Let $I_1, I_2\neq\emptyset$.\\
  Since $B=\sum_{j=1}^n{A_j} + (1-y_l)$ for all $l\in I_2$,
  $y_l$ must have the same value for all $l\in I_2$. Additionally, for
  $i\in I_1$ and $l\in I_2$ it holds that 
   \begin{align*}
     \textstyle
    A_i-(1-y_i)=\sum_{j=1}^n{A_j} + (1-y_l)\quad \Leftrightarrow
    \quad 0=\sum_{j\neq i}{A_j} + (1-y_i) + (1-y_l).
  \end{align*}
  Since all variables are non-negative, we find that $A_j=0 \ \forall
  j\neq i$, $y_i=y_l=1 \ \forall l\in I_2$ and $A_i=B$. Thus,
  $A_{I\setminus \{i\}}=\ve{0}$, $y_{I_1\cup I_2}=\ve{1}$ and the
  only possible fractional components of $\ve{f}$ are $y_j$ for $j\in
  I\setminus(I_1\cup I_2)=\{l_1,\ldots, l_k\}$ and $B=A_i$. But if
  they are fractional, we can construct a convex combination of
  integral points in $\F_{mon}$, because the variables are all strictly
  between 0 and 1 and all possible 0/1 combinations of the fractional
  variables are feasible points. For this purpose let w.l.o.g. the
  fractional $y_{I\setminus(I_1\cup I_2)}$ be ordered so that
  $y_{l_1}\leq\ldots\leq y_{l_m} \leq B\leq y_{l_{m+1}}\leq\ldots\leq
  y_{l_k}$, and let $\ve{v}_S^A=\sum_{i\in S} \ve{e}_i^A, \ S\subseteq
  I$, where $\ve{e}_i^A$ is the i-th unity vector on the $A$-variables
  and the remaining n+1 components are 0. $\ve{v}_S^y$ is analogously
  defined; $\ve{e}_B\in\{0,1\}^{2n+1}$ is the unity vector of $B$. Then
  the convex combination for $\ve{f}$ is 
  \begin{align*}
    \ve{f}=&y_{l_1}( \ve{e}^A_i + \ve{v}^y_I+\ve{e}_B) +
    (y_{l_2}-y_{l_1})(\ve{e}^A_i + \ve{v}^y_{I\setminus \{l_1\}} +
      \ve{e}_B)+\cdots +(B-y_{l_m}) \\
    &( \ve{e}^A_i+\ve{v}^y_{I\setminus\{l_1,\ldots,l_m\}}+\ve{e}_B)
     + (y_{l_{m+1}}-B) \ \ve{v}^y_{I\setminus
        \{l_1,\ldots,l_m\}}+\cdots+ (1-y_{l_k}) \ \ve{v}_{I_1\cup I_2}^y.
  \end{align*}
  If several components have the same value, this construction
  reduces them to one convex multiplicator, i.e. in the next vector
  they will all occur as 0. Thus, $\ve{f}$ is no vertex of
  $P$. {\Large \Lightning}\\

  {\bfseries Case (ii):} Let $I_1=\{1,\ldots,k_1\}$ and
  $I_2=\emptyset$.\\
  The number of integral components of $\ve{f}$ must be greater or
  equal to $n+1$, because $n_f=k_1\leq n$. Thus, there exists an
  index $j\in I_1$ such that the equality $B=A_j-(1-y_j)$ contains at
  least two integral variables. Thus, all contained variables are
  integral and in particular $B$ is integral. In the equalities
  where only $B$ is known to be integral, say $B=A_k-(1-y_k)$, $k\in
  I_1$, $A_k$ and $y_k$ can be fractional. There are two cases to be
  considered: $B=1$ and $B=0$.\\
  {\bfseries B=1:} $\ \Rightarrow 1=A_i-(1-y_i)\ \forall i\in I_1
  \quad \Leftrightarrow\quad 2=A_i+y_i\quad\Rightarrow
  A_i=y_i=1$. Thus, the only fractional components of $\ve{f}$ can be
  $A_{I\setminus I_1}$ and $y_{I\setminus I_1}$. As $B=1$ \emph{and}
  $A_i=y_i=1$ for all $i\in I_1$, there is a `reason' for $B$ being
  1. Therefore the fractional components can be convexly combined as
  in case (i) since $A_{I\setminus I_1}$ and $y_{I\setminus I_1}$ can
  be any 0/1 combination to complete a feasible point for $P$. Hence,
  $\ve{f}$ is no vertex. {\Large \Lightning}.\\
  {\bfseries B=0:} $\ \Rightarrow A_i=1-y_i \ \forall i\in I_1$. Let
  $\{p_1,\ldots,p_m\}=I_1^f\subseteq I_1$ denote the index set of the fractional
  $A$-variables in $I_1$ and thus their corresponding $y$-variables, which also
  have to be fractional according to the equality. For $j\in
  I\setminus I_1$ the inequalities imply that $A_j+y_j<1$. We proceed
  similar to the previous case. W.l.o.g. assume that $A_i\leq y_i \ \forall
  i\in I_1^f$ and that $A_{p_1}\leq y_{j_1}\leq A_{p_2}\leq\ldots\leq
  A_{j_k}\leq y_{j_l}\leq A_{p_m}$, with $j_i\in I\setminus I_1,
  i=1,\ldots, l$,  are the ordered fractional components. For ease of
  notation we will leave out the integral $A_i$ and $y_i$, $i\in
  I_1\setminus I_1^f$ and consider only the remaining components 
  $\ve{f}^*$. Accordingly, $\ve{v}_S^A, \ve{v}_S^y$ and $\ve{e}_B$ are
  adjusted to the new dimension. Then, 
  \begin{align*}
    \ve{f}^*=&A_{p_1}(\ve{v}^A_{I_1^f\cup I\setminus I_1}+\ve{v}^y_{I\setminus I_1}) +
    (y_{j_1}-A_{p_1})(\ve{v}^A_{(I_1^f\cup I\setminus
        I_1)\setminus\{p_1\}}+\ve{v}^y_{I\setminus I_1\cup \{p_1\}})
      +(A_{p_2}-y_{p_1})(\ve{v}^A_{(I_1^f\cup I\setminus
        I_1)\setminus\{p_1\}}+\\ &
      \ve{v}^y_{I\setminus I_1\cup\{p_1\}\setminus \{j_1\}})
      +\cdots + (A_{j_k}-A_{p_{m-1}})(\ve{e}^A_{j_k} + \ve{v}^y_{I_1^f\cup \{j_l\}})+
      (y_{j_l}-A_{j_k})\ve{v}^y_{I_1^f\cup \{j_l\}}  + (y_{p_1}-y_{p_2})\ve{v}^y_{I_1^f}.
  \end{align*}
  The vectors used in this representation are all elements of
  $P$. Furthermore the convex multiplicator sum up to 1, as
  $A_i+y_i=1$ for $i\in I_1^f$. Thus, $\ve{f}$ is not a
  vertex. {\Large \Lightning}\\ 

  {\bfseries Case (iii):} Let $I_2=\{1,\ldots,k_2\}$ and
  $I_1=\emptyset$.\\
  Analogous to case (i), it follows that $y_1=\ldots=y_{k_2}$. We
  also know that there are $n_f=k_2$ fractional components of
  $\ve{f}$.\\
  If these fractional components are $y_i$, $i\in I_2$, the
  other variables are all integral. But the equality
  $B=\sum_{j=1}^n A_j + (1-y_i)$ for all $i\in I_2$ and the
  integrality of all involved variables but $y_i$ yield that
  $y_i\in\{0,1\}$ $\forall i\in I_2$ and therefore $\ve{f}$ is
  integral. \\
  Thus, $y_i$ must be integral for all $i\in I_2$. We need
  to distinguish between 6 cases:
  \begin{enumerate}
  \item[a)] $A_i$ for $i\in I^A\subseteq I$ is fractional ($|I^A|=k_2$), 
  \item[b)] $y_i$ for $i\in I^y\subseteq I\setminus I_2$ is fractional
    ($|I^y|=k_2$), 
  \item[c)] $B, A_i$, $i\in I^A\subseteq I$ are fractional
    ($|I^A|=k_2-1$), 
  \item[d)] $B, y_i$, $i\in I^y\subseteq I\setminus I_2$ are fractional
    ($|I^y|=k_2-1$), 
  \item[e)] $A_i, y_j$, $i\in I^A\subseteq I$, $j\in I^y\subseteq
    I\setminus I_2$ are fractional ($|I^A|+|I^y|=k_2$),
  \item[f)]  $B, A_i, y_j$, $i\in I^A\subseteq I$, $j\in I^y\subseteq
    I\setminus I_2$ are fractional ($|I^A|+|I^y|=k_2-1$),
  \end{enumerate}
  The case where only $B$ is fractional does not have to be
  considered as equality in at least one type two inequality
  immediately forces $B$ to be integral, as well.

  a) With the equality it also follows that $\sum_{i\in I^A} A_i\in
  \{0,1\}$.
  \begin{itemize}
  \item $\sum_{i\in I^A} A_i=0 \ \Rightarrow \ A_i=0 \ \forall i\in
    I^A$. \Lightning
  \item $\sum_{i\in I^A} A_i=1$. $B=\sum_{j=1}^n A_j + (1-y_l)$, $l\in
    I_2 \ \Rightarrow \ B=1,\, A_i=0,\, i\in I\setminus I^A,\, y_l=1,\, l\in
    I_2$. As $B=1$, we only need one $A_i$, $i\in I^A$ to be 1,
    independent of the $y$ pattern, for a feasible point. Let
    $I^A=\{i_1,\ldots,i_m\}$, then $\ve{f}$ can be convexly combined by
    \begin{equation*}
      \ve{f}=A_{i_1}(\ve{e}_{i_1}^A + \ve{v}^y_{I_2} + \ve{v}^y_S
      +\ve{e}_B) +\cdots + A_{i_m}(\ve{e}_{i_m}^A + \ve{v}^y_{I_2} +
      \ve{v}^y_S+\ve{e}_B)
    \end{equation*}
    for the subset $S$ of $I\setminus I_2$, where $y_j=1$. Since we
    know that $\sum_{i\in I^A} A_i=1$, we have a real convex
    combination. \Lightning
  \end{itemize}
  
  b) As $I^y\cap I_2=\emptyset$, $\sum_{j=1}^n A_j +
  (1-y_k)=B<\sum_{j=1}^n A_j + (1-y_i)$, $i\in I^y$ $\Rightarrow$
  $y_i<y_k$ for all $k\in I_2$.
  \begin{itemize}
  \item $y_k=0$ for $k\in I_2 \ \Rightarrow y_i<0$ for $i\in
    I^y$. \Lightning 
  \item $y_k=1$. Thus, either $B=0=A_1=\ldots=A_n$ or (w.l.o.g.)
    $B=1=A_1, \, A_2=\ldots=A_n=0$. In both cases, we can choose any
    $y_{I^y}$ pattern to get a feasible point in $P$. Therefore, one
    can follow the same procedure to construct a convex combination
    of integral points as in case (i). \Lightning
  \end{itemize}

  c) Again, equality implies $B-\sum_{i\in I^A} A_i\in\{0,1\}$.
  \begin{itemize}
  \item $B-\sum_{i\in I^A} A_i=1 \ \Rightarrow \ B=1, \,A_i=0$, $i\in
    I^A$. \Lightning
  \item $B-\sum_{i\in I^A} A_i=0 \ \Leftrightarrow \ 0=\sum_{j\notin
      I^A} A_j + (1-y_k) \ \forall k\in I_2$\\
    $\Rightarrow A_j=0, \, j\notin I^A$ and $y_k=1, \, k\in
    I_2$. Additionally, $y_j\in\{0,1\}$ for $j\in I\setminus I_2$ and
    for a subset $S\subseteq I\setminus I_2$ $y_j=1$. Then we can
    convexly combine integral elements of $P$ to obtain $\ve{f}$. For
    this, let $I^A=\{i_1,\ldots, i_m\}$:
    \begin{equation*}\textstyle
      \ve{f}=A_{i_1}(\ve{e}^A_{i_1}+ \ve{v}^y_{I_2}+ \ve{v}^y_S
      +\ve{e}_B) + \cdots + A_{i_m}(\ve{e}^A_{i_m}+ \ve{v}^y_{I_2}+
      \ve{v}^y_S +\ve{e}_B) + (1-\sum_{i\in I^A} A_i)(\ve{v}^y_{I_2}+
      \ve{v}^y_S)
    \end{equation*}
    Therefore, $\ve{f}$ is no vertex. \Lightning
  \end{itemize}
 
  d) Since $B=\sum_{j=1}^n A_j + (1-y_k)$, $k\in I_2$ and all
  variables involved except for $B$ are integral, $B$ must be
  integral as well. Hence, we are in case b). \Lightning 

  e) As shown in case b) it follows that $y_i<y_k$ for all $i\in I^y$
  and $k\in I_2$ and with this $y_k=1 \,\forall k\in I_2$. This implies
  that $B=\sum_{j=1}^n A_j$ and hence $\sum_{i\in I^A} A_i\in\{0,1\}$.
  \begin{itemize}
  \item $\sum_{i\in I^A} A_i=0 \ \Rightarrow \ A_i=0 \, \forall i\in
    I^A$. The only remaining fractional components are $y_j$, $j\in
    I^y$. The possible cases for the integral components are analogous
    to case b) and they can be convexly combined as in case
    (i). \Lightning
  \item $\sum_{i\in I^A} A_i=1 \ \Rightarrow \  A_i=0 \, i\notin I^A,
    \, B=1,\, y_j=1, \, j\in I_2$. We again want to apply the scheme
    used in case (i), i.e. look for the minimum within the fractional
    components, set all fractional components to 1 in the convex
    combinator, multiply it by the lowest value, then take the
    minimal difference between the fractional entries and the
    previous minimum, set the component of the previous minimum to 0
    and multiply it by the difference of the two. For this purpose it
    is necessary to make sure that there is always at least one $A_i$,
    $i\in I^A$, that is one in the convex combinator. This can be
    achieved by adjusting the system a little bit to a combination of
    the procedures of case (i) and case (iii), a): We set only one
    $A_i$ at a time to 1 until its value is reached and then the next
    becomes 1. For the $y$ components an arbitrary pattern can be
    chosen. For an easier description assume that $I^A=\{i_1,\ldots,
    i_m\}, I^y=\{j_1,\ldots, j_k\}$ and $A_{i_1}\leq y_{j_1}\leq
    A_{i_2}\leq\ldots$:
    \begin{align*}
      \ve{f}=&A_{i_1}(\ve{e}_{i_1}^A+\ve{v}^y_{I_2}+\ve{v}^y_{I^y}
      +\ve{e}_B) + (y_{j_1}-A_{i_1})(\ve{e}_{i_2}^A+\ve{v}^y_{I_2}+\ve{v}^y_{I^y}
      +\ve{e}_B)\\ &+ (A_{i_2}-y_{j_1})(\ve{e}_{i_2}^A+\ve{v}^y_{I_2}
      +\ve{v}^y_{I^y\setminus\{j_1\}} +\ve{e}_B)+\cdots
    \end{align*}
    Since $\sum_{i\in I^A} A_i=1$ it follows that the constructions
    yields a real convex combination. \Lightning
  \end{itemize}

  f) In this case one can construct a convex combination of $\ve{f}$
  by using a combination of the cases e) and c). {\Large \Lightning}

\mbox{ }\\
In a all cases the fractional vertex $\ve{f}$ could be convexly
combined and thus can not be a vertex. This implies $P=\conv(\F_{mon})$
\end{proof}

\end{document}